\documentclass[11pt,twoside]{extarticle}

\usepackage[papersize={8.5in,11in},margin=1in,top=1in,headsep=5mm]{geometry}
\usepackage[utf8]{inputenc}
\setlength{\headheight}{20.68pt}
\usepackage[bottom]{footmisc}

\usepackage[dvipsnames]{xcolor}
\usepackage{fancyhdr}
\pagestyle{fancy}

\fancyhead{}
\date{}
\fancypagestyle{plain}{
  \fancyhead[C,R]{}
  \fancyfoot{} 
}
\setlength{\headheight}{20.68pt} 
\fancypagestyle{plain}{
  \fancyhead[C,R]{}
  \fancyfoot{} 
  }
\usepackage[dvipsnames]{xcolor}
\usepackage[colorlinks,linkcolor = Orchid, urlcolor  = Peach, citecolor = SeaGreen, anchorcolor = Peach]{hyperref}
\usepackage{times}
\makeatletter 
 \fancyhead[RO]{\small\textit{\@title}}
\makeatother
\usepackage[textwidth=2.8cm,colorinlistoftodos, shadow,color=blue!30!white]{todonotes}
\usepackage{caption}
\captionsetup[algorithm]{justification=centering}
\captionsetup[algorithmic]{justification=centering}
\usepackage[noend, boxruled, linesnumbered] {algorithm2e}
\usepackage{algorithmic}

\usepackage[utf8]{inputenc} 
\usepackage[T1]{fontenc}    
\usepackage{hyperref}       
\usepackage{url}            
\usepackage{booktabs}       
\usepackage{amsfonts}       
\usepackage{nicefrac}       
\usepackage{microtype}      
\usepackage{amssymb}
\usepackage{amsmath}
\usepackage{tabu}
\usepackage{caption}
\usepackage{multirow}
\usepackage{nicefrac}
\usepackage{comment}
\usepackage{color}
\usepackage{array}
\newcolumntype{P}[1]{>{\centering\arraybackslash}m{#1}}

\DeclareMathOperator*{\argmin}{argmin}
\DeclareMathOperator*{\argmax}{argmax}
\newcommand{\R}{\mathbb{R}}
\newcommand{\N}{\mathbb{N}}
\DeclareMathOperator*{\E}{\mathbb{E}}
\newcommand{\norm}[1]{\|#1\|}
\newcommand{\normFull}[1]{\left\| #1 \right\|}
\newcommand{\mat}[1]{\boldsymbol{#1}}
\newcommand{\upperthres}[2]{\Pi_{#2}(#1)}
\newcommand{\lowerthres}[2]{\bar{\Pi}_{#2}(#1)}
\newcommand{\topind}[2]{I_{#2}(#1)}
\newcommand{\botind}[2]{\bar{I}_{#2}(#1)}
\newcommand{\gap}{\mathrm{gap}}
\newcommand{\sr}{\mathrm{sr}}
\newcommand{\nnz}{\mathrm{nnz}}
\newcommand{\defeq}{\stackrel{\mathrm{\scriptscriptstyle def}}{=}}
\newcommand{\otilde}{\tilde{O}}
\usepackage{thmtools}
\usepackage{thm-restate}

\usepackage{hyperref}

\usepackage{cleveref}

\newtheorem{theorem}{Theorem}
\newtheorem{lemma}[theorem]{Lemma}

\newtheorem{definition}[theorem]{Definition}

\newenvironment{proof}%
{%
 \par\noindent{\bfseries\upshape Proof\ }%
}%

\title{Exploiting Numerical Sparsity for Efficient Learning : \\
Faster Eigenvector Computation and Regression}

\author{
  Neha Gupta \\
  Stanford University\\
  \texttt{nehagupta@cs.stanford.edu} \\
   \and
   Aaron Sidford \\
   Stanford University \\
   \texttt{sidford@stanford.edu} \\
}
\begin{document}

\maketitle

\begin{abstract}
In this paper, we obtain improved running times for regression and top eigenvector computation for numerically sparse matrices. Given a data matrix $\mat{A} \in \R^{n \times d}$ where every row $a \in \R^d$ has $\|a\|_2^2 \leq L$ and numerical sparsity at most $s$, i.e. $\|a\|_1^2 / \|a\|_2^2 \leq s$, we provide faster algorithms for these problems in many parameter settings.

For top eigenvector computation, we obtain a running time of $\otilde(nd + r(s + \sqrt{r s}) / \gap^2)$ where $\gap > 0$ is the relative gap between the top two eigenvectors of $\mat{A}^\top \mat{A}$ and $r$ is the stable rank of $\mat{A}$. This running time improves upon the previous best unaccelerated running time of $O(nd + r d / \gap^2)$ as it is always the case that $r \leq d$ and $s \leq d$.

For regression, we obtain a running time of $\otilde(nd + (nL / \mu) \sqrt{s nL / \mu})$ where $\mu > 0$ is the smallest eigenvalue of $\mat{A}^\top \mat{A}$. This running time improves upon the previous best unaccelerated running time of $\otilde(nd + n L d / \mu)$. This result expands the regimes where regression can be solved in nearly linear time from when $L/\mu = \otilde(1)$ to when $L / \mu = \otilde(d^{2/3} / (sn)^{1/3})$.

Furthermore, we obtain similar improvements even when row norms and numerical sparsities are non-uniform and we show how to achieve even faster running times by accelerating using approximate proximal point \cite{frostig2015regularizing} / catalyst \cite{lin2015universal}. Our running times depend only on the size of the input and natural numerical measures of the matrix, i.e. eigenvalues and $\ell_p$ norms, making progress on a key open problem regarding optimal running times for efficient large-scale learning.

  \end{abstract}
\section{Introduction}
Regression and top eigenvector computation are two of the most fundamental problems in learning, optimization, and numerical linear algebra. They are central tools for data analysis and of the simplest problems in a hierarchy of complex machine learning computational problems. Consequently, developing provably faster algorithms for these problems is often a first step towards deriving new theoretically motivated algorithms for large scale data analysis.

Both regression and top eigenvector computation are known to be efficiently reducible \cite{garber2016faster} to the more general and prevalent
finite sum optimization problem of minimizing a convex function $f$ decomposed
into the sum of $m$ functions $f_1, ... , f_m$, i.e.
$\min_{x \in \R^n} f(x)
\text{ where }
f(x) = \frac{1}{m} \sum_{i \in [m]} f_i(x).$
This optimization problem encapsulates a variety of learning tasks where we have data points $\{(a_1,b_1), (a_2,b_2), \cdots, (a_n,b_n)\}$ corresponding to feature vectors $a_i$, labels $b_i$, and we wish to find the predictor $x$ that minimizes the average loss of predicting $b_i$ from $a_i$ using $x$, denoted by $f_i(x)$.

Given the centrality of this problem to machine learning and optimization, over the past few years there has been extensive research on designing new provably efficient methods for solving this problem \cite{johnson2013accelerating, frostig2015regularizing, konevcny2017semi, bottou2004large, shalev2013stochastic}. Using a variety of sampling techniques, impressive running time improvements have been achieved. The emphasis in this line of work has been on improving the dependence on the number of gradient evaluations of the $f_i$ that need to be performed, i.e. improving dependence on $m$, as well as improving the dependence on other problem parameters.

The question of what structural assumptions on $f_i$ allow even faster running times to be achieved, is much less studied. A natural and fundamental question in this space, is when can we achieve faster running times by computing the gradients of $f_i$ approximately, thereby decreasing iteration costs. While there has been work on combining coordinate descent methods with these stochastic methods \cite{konevcny2017semi}, in the simple cases of regression and top eigenvector computation these methods do not yield any improvement in iteration cost. More broadly, we are unaware of previous work on linearly convergent algorithms with faster running times for finite sum problems through this approach.\footnote{During the final editing of this paper, we found that there was an independent result \cite{tan2018stochastic} which also investigates the power of iterative optimization algorithms with $o(d)$ iteration costs and we leave it to further research to compare these results.}

In this paper, we advance our understanding of the computational power of subsampling gradients of the $f_i$ for the problems of top eigenvector computation and regression. In particular, we show that under assumptions of numerical sparsity of the input matrix, we can achieve provably faster algorithms and new nearly linear time algorithms for a broad range of parameters. We achieve our result by applying coordinate sampling techniques to Stochastic Variance Reduced Gradient Descent (SVRG) \cite{konevcny2017semi, johnson2013accelerating}, a popular tool for finite sum optimization, along with linear algebraic data structures (in the case of eigenvector computation) that we believe may be of independent interest.

The results in this paper constitute an important step towards resolving a key gap in our understanding of optimal iterative methods for top eigenvector computation and regression. Ideally, running times of these problems would depend only on the size of the input, e.g. the number of non-zero entries in the input data matrix, row norms, eigenvalues, etc. However, this is not the case for the current fastest regression algorithms as these methods work by
picking rows of the matrix non-uniformly yielding expected iteration costs that depend on brittle weighted sparsity measures (which for simplicity are typically instead stated in terms of the maximum sparsity among all rows, See Section~\ref{sec:prev_algs}). This causes particularly unusual running times for related problems like nuclear norm estimation \cite{musco2017spectrum}.

This paper takes an important step towards resolving this problem by providing running times for top eigenvector computation and regression that depend only on the size of the input and natural numerical quantities like eigenvalues, $\ell_1$-norms, $\ell_2$-norms, etc. While our running times do not strictly dominate those based on the sparsity structure of the input (and it is unclear if such running times are possible), they improve upon the previous work in many settings.  Ultimately, we hope this paper provides useful tools for even faster algorithms for solving large scale learning problems.
\subsection{The Problems}
\label{section:problems}
Throughout this paper we let $\mat{A} \in \R^{n \times d}$ denote a data matrix with rows $a_1, ... , a_n \in \R^{d}$. We let  $\sr(\mat{A}) \defeq  \norm{\mat{A}}_F^2 / \norm{\mat{A}}_2^2$ denote the stable rank of $\mat{A}$ and we let $\nnz(\mat{A})$ denote the number of non-zero entries in $\mat{A}$. For symmetric $\mat{M} \in \R^{d \times d}$ we let $\lambda_{1}(\mat{M}) \geq \lambda_2(\mat{M}) \geq ... \geq \lambda_{d}(\mat{M})$ denote its eigenvalues, $\norm{x}^2_{\mat{M}} = x^\top \mat{M}x$ and we let $\gap(\mat{M}) \defeq (\lambda_1(\mat{M}) - \lambda_2(\mat{M})) / \lambda_1(\mat{M})$ denote its (relative) eigenvalue gap. For convenience we let $\gap \defeq \gap(\mat{A}^\top \mat{A})$, $\lambda_1 \defeq \lambda_1(\mat{A}^\top \mat{A})$, $\mu \defeq \lambda_{\min} \defeq \lambda_d(\mat{A}^\top \mat{A})$, $\sr \defeq \sr(\mat{A})$, $\nnz(\mat{A}) \defeq \nnz$, $\kappa \defeq \norm{\mat{A}}_F^2 / \mu$, and $\kappa_{\max} \defeq \lambda / \mu$. We use $\tilde{O}$ notation to hide polylogarithmic factors in the input parameters and error rates. With this notation, we consider the following two optimization problems.
\begin{definition} [\textbf{Top Eigenvector Problem}] \label{problem:eigenvector}
Find $v^* \in \R^d$ such that
\[
v^* = \argmax_{x \in \R^d, \norm{x}_2 = 1} x^\top \mat{A}^\top \mat{A} x
\]
We call $v$ an \emph{$\epsilon$-approximate solution} to the problem if $\norm{v}_2 = 1$ and
\(
v^\top\mat{A}^\top\mat{A}v \geq (1 - \epsilon)\lambda_1(\mat{A}^\top\mat{A}) ~.
\)
\end{definition}
\begin{definition}[\textbf{Regression Problem}]  \label{problem:regression}
Given $b \in \R^n$ find $x^* \in \R^d$ such that
\[x^* = \argmin_{x \in \R^d}\norm{\mat{A}x - b}_2^2\]
Given initial $x_0 \in \R^d$, we call $x$  an \emph{$\epsilon$-approximate solution} if
$\norm{x - x^*}_{\mat{A}^\top\mat{A}} \leq \epsilon\norm{x_0 - x^*}_{\mat{A}^\top\mat{A}}$.
\end{definition}
Each of these are known to be reducible to the finite sum optimization problem. The regression problem is equivalent to the finite sum problem with $f_i(x) \defeq (m/2) (a_i^\top x - b_i)^2$ and the top eigenvector problem is reducible with only polylogarithmic overhead to the finite sum problem with
$f_i(x) \defeq (\lambda/2) \norm{x - x_0}_2^2 - (m/2) (a_i^\top (x - x_0))^2 + b_i^\top x$ for carefully chosen
$\lambda$ and $x_0$, and some $b_is$ \cite{garber2016faster}.
\subsection{Our Results}
In this paper, we provide improved iterative methods for top eigenvector computation and regression that depend only on regularity parameters and not the specific sparsity structure of the input. Rather than assuming uniform row sparsity as in previous work, our running times depend on the numerical sparsity of rows of $\mat{A}$, i.e. $s_i \defeq \norm{a_i}_1^2 / \norm{a_i}_2^2$, which is at most the row sparsity, but may be smaller.

Note that our results, as stated, are worse as compared to the previous running times which depend on the $\ell_0$ sparsity
 in some parameter regimes. For simplicity, we state our results in terms of only the numerical sparsity.
 However, when the number of non-zero entries in a row is small, we can always choose that row completely and not sample it, yielding results that are always as good as the previous results and strictly better in some parameter regimes.
\subsubsection{Top Eigenvector Computation}
For top eigenvector computation, we give an unaccelerated running time of $$\tilde{O}\left(\nnz(\mat{A}) + \frac{1}{\gap^2}\sum_i \frac{\norm{a_i}_2^2}{\lambda_1}\left(\sqrt{s_i} + \sqrt{\sr(\mat{A})}\right)\sqrt{s_i}\right)$$
and an accelerated running time of $$\tilde{O}\left(\nnz(\mat{A}) + \frac{\nnz(\mat{A})^{\frac{3}{4}}}{\sqrt{\gap}}\left({\sum_i \frac{\norm{a_i}_2^2}{\lambda_1}\left(\sqrt{s_i} + \sqrt{\sr(\mat{A})}\right)\sqrt{s_i})}\right)^{\frac{1}{4}}\right)$$
 as compared to the previous unaccelerated running time
of $$\tilde{O}\left(\nnz(\mat{A}) + \max_i\nnz(a_i) \frac{\sr(\mat{A})}{\gap^2}\right)$$ and accelerated iterative methods of
$$\tilde{O}\left(\nnz(\mat{A}) + \nnz(\mat{A})^{\frac{3}{4}}\left(\max_i\nnz(a_i) \frac{\sr(\mat{A})}{\gap^2}\right)^{\frac{1}{4}}\right) ~.$$

 In the simpler case of uniform
uniform row norms $\norm{a_i}_2^2 = \norm{a}_2^2$ and uniform row sparsity $s_i = s$, our running time (unaccelerated) becomes $\tilde{O}(\nnz(\mat{A}) + ({\sr(\mat{A})}/{\gap^2})(s + \sqrt{\sr(\mat{A})\cdot s}))$.
  To understand the relative strength of our results, we give an example of one parameter regime where our running times are strictly better
  than the previous running times.
  When the rows are numerically sparse i.e. $s = O(1)$ although $\nnz(a_i) = d$, then our unaccelerated running time $\tilde{O}(\nnz(\mat{A}) + ({\sr(\mat{A})}/{\gap^2})\sqrt{\sr(\mat{A})})$
  gives significant improvement over the previous best unaccelerated running time of $\tilde{O}(\nnz(\mat{A}) + d(\sr(\mat{A})/{\gap^2}))$ since $\sr(\mat{A}) \leq d$.
Also, our accelerated running time of $\tilde{O}\left(nd + \frac{(nd)^{{3}/{4}}}{\sqrt{\gap}}{\sr^{3/8}}\right)$ gives significant improvement
over the previous best accelerated running time of $\tilde{O}\left(nd + \frac{(nd)^{{3}/{4}}(d \cdot\sr )^{{1}/{4}}}{\sqrt{\gap}}\right)$ since $\sr(\mat{A}) \leq d$.

Using the same techniques as in \cite{garber2016faster}, we can also achieve a gap free unaccelerated running time of $$\tilde{O}\left(\nnz(\mat{A}) + \frac{1}{\epsilon^2}\sum_i \frac{\norm{a_i}_2^2}{\lambda_1}\left(\sqrt{s_i} + \sqrt{\sr(\mat{A})}\right)\sqrt{s_i}\right)$$
and a gap free accelerated running time of $$\tilde{O}\left(\nnz(\mat{A}) + \frac{\nnz(\mat{A})^{\frac{3}{4}}}{\sqrt{\epsilon}}\left({\sum_i \frac{\norm{a_i}_2^2}{\lambda_1}\left(\sqrt{s_i} + \sqrt{\sr(\mat{A})}\right)\sqrt{s_i})}\right)^{\frac{1}{4}}\right)$$
for the regime when $\epsilon \geq \gap$.

\subsubsection{Regression}
In particular, for regression, we give an unaccelerated running time of
$$
\tilde{O}\left(\nnz(\mat{A}) + \sqrt{\kappa}\sum_i\sqrt{s_i}\frac{\norm{a_i}^2_2}{  \mu}\right)
$$
and an accelerated running time of
$$
\tilde{O}\left(\nnz(\mat{A}) + \nnz(\mat{A})^{\frac{2}{3}}{\kappa}^{\frac{1}{6}}{\left(\sum_{i \in [n]}\sqrt{s_i}\frac{\norm{a_i}^2_2}{\mu}\right)}^{\frac{1}{3}}\right)
$$
Our methods improve upon the previous best unaccelerated iterative methods of $$\tilde{O}\left(\nnz(\mat{A}) + \kappa \max_i\nnz(a_i)\right)$$ and
accelerated iterative methods of
 $$\tilde{O}\left(\nnz(\mat{A}) + d\max_i\nnz(a_i) + \sum_i\frac{\norm{a_i}_2 }{ \sqrt{\mu}} \cdot \sqrt{\sigma_i(\mat{A})}\max_i\nnz(a_i)\right)$$ where $\sigma_i(\mat{A}) = \norm{a_i}^2_{(\mat{A}^\top\mat{A})^{-1}}$.
In the simpler case of uniform row norms $\norm{a_i}_2^2 = \norm{a}_2^2$ and uniform row sparsity $s_i = s$,
  our (unaccelerated) running time becomes
 \(
 \tilde{O}(\nnz(\mat{A}) + \kappa^{{3}/{2}}\sqrt{s})
 \)

 To understand the relative strength of our results, we give an example of one parameter regime where our running times are strictly better
   than the previous running times. Consider the case where the rows are numerically sparse i.e.
$s = O(1)$ but $\max_i \nnz(a_i) = d$. In that case, our unaccelerated running time becomes
\(
\tilde{O}(\nnz(\mat{A}) + \kappa^{{3}/{2}})
\)
and the previous best unaccelerated running time becomes $\tilde{O}(\nnz(\mat{A}) + \kappa d)$.
In this case, our algorithm is stricly better then the previous one when $\kappa = o(d^{2})$.

In this case, our accelerated running time becomes $\tilde{O}\left(\nnz(\mat{A}) + \nnz(\mat{A})^{\frac{2}{3}}{\kappa}^{\frac{1}{2}}\right)$
which is stricly better than the previous best running time of $\tilde{O}\left(\nnz(\mat{A}) + d^2 + \sqrt{\kappa}d^{3/2}\right)$
for certain values of $n$.
\subsection{Overview of Our Approach}
We achieve these results by carefully modifying known techniques for finite sum optimization problem to our setting. The starting point for our algorithms is Stochastic Variance Reduced Gradient Descent (SVRG) \cite{johnson2013accelerating} a popular method for finite sum optimization. This method takes steps in the direction of negative gradient in expectation and has a convergence rate which depends on a measure of variance of the steps.

We apply SVRG to our problems where we carefully subsample the entries of the rows of the data matrix so that we can compute steps that are the gradient in expectation in time possibly sublinear in the size of the row. There is an inherent issue in such a procedure, in that this can change the shape of variance. Previous sampling methods for regression ensure that the variance can be directly related to the function error, whereas here such sampling methods give $\ell_2$ error, the bounding of which in terms of function error can be expensive.

It is unclear how to completely avoid this issue and we leave this as future work. Instead, to mitigate this issue we provide several
techniques for subsampling that ensure we can obtain significant decrease in this $\ell_2$ error for small increases in the number of samples we take per row (see Section \ref{section:sampling}). Here we crucially use that we have bounds on the numerical sparsity of rows of the data matrix and prove that we can use this to quantify this decrease.

Formally, the sampling problem we have for each row is as follows. For each row $a_i$ at any point we may receive
 some vector $x$ and need to compute a random vector $g$ with $\E[g] = a_i a_i^\top x$ and with
  $\E \norm{g}_2^2$ sufficiently bounded. Ideally, we would have that $\E \norm{g}_2^2 \leq \alpha (a_i^\top x)^2$
  for some value of $\alpha$, as previous methods do. However, instead we settle for a bound of the form
  $\E \norm{g}_2^2 \leq \alpha ({a_i}^\top x)^2 + \beta \norm{x}_2^2$.

Our sampling schemes for this problem
  works as follows: For the outer $a_i$, we sample from the coordinates with probability proportional to the coordinate's absolute value, we take a few
    (more than 1) samples to control the variance (Lemma~\ref{lemma:h1variance}). For the approximation of
    $a_i^\top x$, we always take the dot product of $x$ with large coordinates of $a_i$ and we sample from the
    rest with probability proportional to the squared value of the coordinate of $a_i$ and take more than one sample to control the variance (Lemma~\ref{lemma:h2variance}).

Carefully controlling the number of samples we take per row and picking the right distribution over rows gives our bounds for regression.
For eigenvector computation, the same broad techniques work but a little more care needs to be taken to keep
the iteration costs down due to the structure of $f_i(x) \defeq \lambda \norm{x - x_0}_2^2 - (m/2) (a_i^\top (x - x_0))^2 + b_i^\top x$.
Interestingly, for eigenvector computation the penalty from $\ell_2$ error is in some sense smaller due to the structure of the objective.
\subsection{Previous Results}
Here we briefly cover previous work on regression and eigenvector computation (Section~\ref{sec:prev_algs}), sparse finite sum optimization (Section~\ref{sec:prev_sparsity}), and matrix entrywise sparsification (Section~\ref{sec:entry_sparse}).
\subsubsection{Regression and Eigenvector Algorithms}
\label{sec:prev_algs}
There is an extensive research on regression, eigenvector computation, and finite sum optimization and far too many results to state; here we include
the algorithms with the best known running times.

The results for top eigenvector computation are stated in Table~\ref{tab:EigenVectorResults} and the results for
regression are stated in Table~\ref{tab:RegressionResults}.
The algorithms work according to the weighted sparsity measure of all rows and do not take into account
the numerical sparsity which is a natural parameter to state the running times in and is strictly better than the
$\ell_0$ sparsity.
\begin{table}[h]
\tiny
\begin{center}
\caption{Previous results for computing $\epsilon$-approximate top eigenvector (Definition \ref{problem:eigenvector}). \label{tab:EigenVectorResults}}
{\footnotesize
{\tabulinesep=0.3mm
\begin{tabular}{ |>{\tiny}P{2.2cm}|>{\tiny}P{6.0cm}|>{\tiny}P{4.5cm}| }
\hline
{\bf Algorithm} &  {\bf Runtime} & {\bf Runtime with uniform row norms and sparsity}  \\
\hline
{Power Method} & {$\tilde{O}\left(\frac{\nnz}{\gap}\right)$} & {$\tilde{O}\left(\frac{nd}{\gap}\right)$}   \\
\hline
{Lanczos Method} & {$\tilde{O}\left(\frac{\nnz}{\sqrt{\gap}}\right)$} & {$\tilde{O}\left(\frac{nd}{\sqrt{\gap}}\right)$}   \\
\hline
{Fast subspace embeddings + Lanczos method \cite{clarkson2013low}} & {$\tilde{O}\left(\nnz + \frac{d \cdot\sr }{\max{\{\gap^{2.5}, \epsilon, \epsilon^{2.5}}\}}\right)$} & {$\tilde{O}\left(nd + \frac{d\cdot \sr }{\max{\{\gap^{2.5}, \epsilon, \epsilon^{2.5}}\}}\right)$}  \\
\hline
{SVRG (assuming bounded row norms and warm start) \cite{shamir2015stochastic}} & {$\tilde{O}\left(\nnz + \frac{d \cdot\sr^2}{\gap^2}\right)$}  & {$\tilde{O}\left(nd + \frac{d \cdot\sr ^2}{\gap^2}\right)$}  \\
\hline
{Shift \& Invert Power method with SVRG \cite{garber2016faster}} & {$\tilde{O}\left(\nnz  + \frac{d\cdot \sr }{\gap^2}\right)$} & {$\tilde{O}\left(nd + \frac{d \cdot\sr }{\gap^2}\right)$}  \\
\hline
{Shift \& Invert Power method with Accelerated SVRG \cite{garber2016faster}} & {$\tilde{O}\left(\nnz + \frac{\nnz ^{\frac{3}{4}}(d\cdot \sr )^{\frac{1}{4}}}{\sqrt{\gap}}\right)$} & {$\tilde{O}\left(nd + \frac{(nd)^{\frac{3}{4}}(d \cdot\sr )^{\frac{1}{4}}}{\sqrt{\gap}}\right)$} \\
\hline
{\textbf{This paper}} & {$\tilde{O}\left(\nnz + \frac{1}{\gap^2\lambda_1}\sum_i \norm{a_i}_2^2\left(\sqrt{s_i} + \sqrt{\sr }\right)\sqrt{s_i}\right)$} & {$\tilde{O}\left(nd + \frac{sr }{\gap^2}(\sqrt{s} + \sqrt{\sr })\sqrt{s}\right)$}\\
\hline
{\textbf{This paper}} & {$\tilde{O}\left(\nnz + \frac{\nnz^{\frac{3}{4}}}{\sqrt{\gap}}({\sum_i \frac{\norm{a_i}_2^2}{\lambda_1}(\sqrt{s_i} + \sqrt{\sr })\sqrt{s_i})})^{\frac{1}{4}}\right)$
} & {$\tilde{O}\left(nd + \frac{(nd)^{\frac{3}{4}}}{\sqrt{\gap}}{\sr^{1/4}(s + \sqrt{\sr \cdot s})}^{\frac{1}{4}}\right)$
}\\
\hline
\end{tabular}
}
}
\end{center}
\end{table}
\begin{table}[h]
\tiny
\begin{center}
\caption{Previous results for solving approximate regression (Definition \ref{problem:regression}). \label{tab:RegressionResults}}
{\footnotesize
{\tabulinesep=0.3mm
\begin{tabular}{ |>{\small}P{3.1cm}|>{\small}P{6.5cm}|>{\small}P{4cm}| }
\hline
{\bf Algorithm} &  {\bf Runtime} & {\bf Runtime with uniform row norms and sparsity}  \\
\hline
{Gradient Descent} & {$\tilde{O}( \nnz  \kappa_{\max} )$} & {$\tilde{O}(nd \kappa_{\max} )$}   \\
\hline
{Conjugate Gradient Descent} & {$\tilde{O}(\nnz  \sqrt{\kappa_{\max}})$} & {$\tilde{O}(nd \sqrt{\kappa_{\max}})$}   \\
\hline
{SVRG \cite{johnson2013accelerating}} & {$\tilde{O}(\nnz  + \kappa d)$}  & {$\tilde{O}(nd + \kappa d)$}  \\
\hline
{Accelerated SVRG \cite{allen2017katyusha, frostig2015regularizing, lin2015universal}} & {$\tilde{O}(\nnz  + \sqrt{n\kappa} d)$} & {$\tilde{O}(nd + \sqrt{n\kappa} d)$}  \\
\hline
{Accelerated SVRG with leverage score sampling \cite{agarwal2017leverage}} & {$\tilde{O}( \nnz  + d\max_i\nnz(a_i) + \sum_i\frac{\norm{a_i}_2}{ \sqrt{\mu}} \cdot \sqrt{\sigma_i(\mat{A})}\max_i\nnz(a_i))$} & {$\tilde{O}(nd + d^2 + {\sqrt{\kappa}} \cdot d^{3/2})$} \\
\hline
{\textbf{This paper}} & {$\tilde{O}(\nnz  + \sqrt{\kappa}\sum_i \frac{\norm{a_i}^2_2}{\mu} \sqrt{s_i})$} & {$\tilde{O}(nd + \sqrt{\kappa^{3}} \sqrt{s})$}\\
\hline
{\textbf{This paper}} & {$\tilde{O}(\nnz + \nnz^{2/3}{\kappa}^{1/6}{(\sum_{i \in [n]} \frac{\norm{a_i}^2_2}{\mu}\sqrt{s_i})}^{1/3})$} & {$\tilde{O}(nd + (nd)^{2/3}{\kappa}^{1/2}{ s}^{1/6})$}\\
\hline
\end{tabular}
}
}
\end{center}
\end{table}
\subsubsection{Sparsity Structure} \label{sec:prev_sparsity}
There has been some prior work on attempting to improve for sparsity structure. Particularly relevant is the work of \cite{konevcny2017semi} on combining coordinate descent and sampling schemes. This paper picks unbiased estimates of the gradient at each step by first picking a function and then picking a random coordinate whose variance decreases as time increases. Unfortunately, for regression and eigenvector computation computing a partial derivative is as expensive as computing the gradient
and hence, this method does not give improved running times for these problems of regression and top eigenvector computation.
\footnote{During the final editing of this paper, we found that there was an independent result \cite{tan2018stochastic} which also investigates the power of iterative optimization algorithms with $o(d)$ iteration costs and we leave it to further research to compare these results.}

\subsubsection{Entrywise Sparsification}\label{sec:entry_sparse}
Another natural approach to yielding the results of this paper would be to simply subsample the entries of $\mat{A}$ beforehand and use this as a preconditioner to solve the problem. There have been multiple works on such entrywise sparsification and in Table~\ref{tab:Results} we provide them. If we optimistically compare them to our approach, by supposing that their sparsity bounds are uniform (i.e. every row has the same sparsity) and bound its quality as a preconditioner the best of these would give bounds of $\tilde{O}(\nnz(\mat{A}) + {\lambda_{\max}\norm{\mat{A}}_F^4}/{\lambda^3_{\min}})$ \cite{kundu2014note} and $ \tilde{O}(\nnz(\mat{A}) + {\sqrt{\lambda_{\max}}\norm{\mat{A}}_F^2\sum_{i}\sqrt{s_i}\norm{a_i}_2}/{\sqrt{n}\lambda^2_{\min}})$ \cite{arora2006fast} and $\tilde{O}(\nnz(\mat{A}) + {\norm{\mat{A}}_F^2\sum_is_i\norm{a_i}_2^2\lambda_{\max}}/{n\lambda^3_{\min}})$ \cite{achlioptas2013near} for regression. Bound obtained by \cite{kundu2014note} depends on the
the condition number square and does not depend on the numerical sparsity structure of the matrix. Bound obtained by \cite{arora2006fast} is worse as compared to our bound when compared
with matrices having equal row norms and uniform sparsity. Our running time for regression is  $\tilde{O}(\nnz(\mat{A}) + \sqrt{\kappa}\sum_i ({\norm{a_i}^2_2}/{\mu})\sqrt{s_i})$.
Our results are not always comparable to that by \cite{achlioptas2013near}. Assuming uniform sparsity and row norms, we
get that the ratio of our running time to the running time of \cite{achlioptas2013near} is $ {(\lambda_{\min}n)}/{(\sqrt{s}\lambda_{\max}\sqrt{\kappa})}$.
Depending on the values of the particular parameters, the ratio can be both greater or less than 1 and hence, the results are incomparable. Our results are
always better than that obtained by \cite{kundu2014note}.
%

\section{Sampling Techniques}\label{section:sampling}
In this section, we first give the notation that would be used throughout the paper and then state the sampling techniques used for sampling from a matrix.

\subsection{Notation}
\textbf{Vector Properties: }For $a \in \R^d$, let $s(a) = \norm{a}_1^2 / \norm{a}_2^2$ denote the numerical sparsity.
For $c \in \{1,2, \ldots, d\}$, let $\topind{a}{c}$ denote the set of indices with the  $c$ largest coordinates of $a$ in absolute value
and $\botind{a}{c} = [d]\setminus \topind{a}{c}$ i.e. everything except the indices of top $c$ co-ordinates in absolute value. Let $({\upperthres{a}{c}})_i = a_i$ if $i \in \topind{a}{c}$ and $0$ otherwise and $\lowerthres{a}{c} = a - \upperthres{a}{c}$.
 Let $\hat{e}_j$ denote the ith basis vector i.e. $(\hat{e}_j)_i = 1$ if $i = j$ and 0 otherwise.\\[0.05in]
\textbf{Other: } We let $[d]$ denote the set $\{1,2, \ldots, d\}$. We use $\tilde{O}$ notation to hide
polylogarithmic factors in the input parameters and error rates.
Refer to Section~\ref{section:problems} for other definitions.

\subsection{Sampling Lemmas}
In this section, we provide our key tools for sampling from a matrix for both regression and eigenvector computation. First, we provide a technical lemma
on numerical sparsity that we will use throughout our analysis. Then, we provide and analyze the sampling distribution we use to sample from our matrix for SVRG.
We use the same distribution for both the applications, regression and eigenvector computation and provide some of the analysis of properties of this distribution.
All proofs in this section are differed to Appendix~\ref{sec:sample_proofs}.

We begin with a lemma at the core of the proofs of our sampling techniques. The lemma essentially states that for a numerically sparse vector, most of the $\ell_2$-mass of the vector is concentrated in its top few coordinates. Consequently, if a vector is numerically sparse then we can remove a few big coordinates from it and reduce its $\ell_2$ norm considerably. Later, in our sampling schemes, we will use this lemma to bound the variance of sampling a vector.
\begin{restatable}[Numerical Sparsity]{lemma}{sparsityratio}
\label{lemma:sparsityratio}
For $a \in \R^d$ and $c \in [d]$,  we have
$\norm{\lowerthres{a}{c}}_2^2 \leq  \norm{a}_2^2 \frac{s(a)}{c}$.
\end{restatable}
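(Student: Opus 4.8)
The plan is to reduce the statement to the equivalent inequality $\norm{\lowerthres{a}{c}}_2^2 \leq \norm{a}_1^2 / c$, which is the same thing since $\norm{a}_2^2 \cdot s(a)/c = \norm{a}_1^2/c$ by the definition $s(a) = \norm{a}_1^2/\norm{a}_2^2$. Without loss of generality I would index the coordinates in nonincreasing order of absolute value, so that $\topind{a}{c} = \{1,\ldots,c\}$ and $\botind{a}{c} = \{c+1,\ldots,d\}$; then $\lowerthres{a}{c}$ is exactly the vector obtained by zeroing the first $c$ entries, and $\norm{\lowerthres{a}{c}}_2^2 = \sum_{i>c} a_i^2$.

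The key observation is that every tail coordinate is dominated by the average $\ell_1$-mass of the top block. Concretely, fix any $i > c$. Because the coordinates are sorted, each of the $c$ largest entries $a_1,\ldots,a_c$ has absolute value at least $|a_i|$, so $\norm{a}_1 \geq \sum_{j=1}^{c} |a_j| \geq c\,|a_i|$, which gives the pointwise bound $|a_i| \leq \norm{a}_1 / c$ for every $i > c$.

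With this estimate in hand I would bound the tail norm by pulling out a single factor of the largest tail coordinate:
\[
\norm{\lowerthres{a}{c}}_2^2 = \sum_{i>c} a_i^2 \leq \Big(\max_{i>c}|a_i|\Big)\sum_{i>c}|a_i| \leq \frac{\norm{a}_1}{c}\sum_{i>c}|a_i| \leq \frac{\norm{a}_1^2}{c},
\]
where the last inequality uses $\sum_{i>c}|a_i| \leq \norm{a}_1$. Rewriting $\norm{a}_1^2 = s(a)\,\norm{a}_2^2$ yields $\norm{\lowerthres{a}{c}}_2^2 \leq \norm{a}_2^2\, s(a)/c$, as claimed.

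There is no serious obstacle in this argument; the only step deserving care is the pointwise estimate $|a_i| \leq \norm{a}_1/c$, which relies essentially on thresholding by the $c$ \emph{largest} entries. If one removed an arbitrary set of $c$ coordinates instead, the top block would no longer dominate the tail and the clean factor-of-$c$ saving would fail, so it is important that $\topind{a}{c}$ is the set of $c$ largest coordinates in absolute value.
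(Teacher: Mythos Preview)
Your proof is correct and essentially identical to the paper's: both sort the coordinates, factor out the largest tail entry $|a_{c+1}|$ from $\sum_{i>c}a_i^2$, bound $|a_{c+1}|\le\norm{a}_1/c$ using the top-$c$ block, and then bound the remaining tail $\ell_1$-sum by $\norm{a}_1$.
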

The following lemmas state the sampling distribution that we use for sampling the gradient function in SVRG. Basically, since we want to approximate the gradient of $f(x) = \frac{1}{2}x \mat{A}^\top  \mat{A} x - b^\top x$ i.e. $\mat{A}^\top \mat{A} x - b$, we would like to sample $\mat{A}^\top  \mat{A} x = \sum_{i \in [n]} a_i a_i^\top x$.

We show how to perform this sampling and analyze it in several steps.
In Lemma~\ref{lemma:h1variance} we show how to sample from $a$ and then in
Lemma~\ref{lemma:h2variance} we show how to sample from $a^\top  x$.
In Lemma~\ref{lemma:h3variance} we put these together to sample from $a a^\top  x$
and in Lemma~\ref{lemma:h4variance} we put it all together to sample from
 $\mat{A}^\top  \mat{A}$. The distributions and our guarantees on them are stated below.\\
\newcommand{\Samplevec}{\texttt{Samplevec}}
\newcommand{\Samplemat}{\texttt{Samplemat}}
\newcommand{\Samplevecdotproduct}{\texttt{Samplerankonemat}}
\newcommand{\Samplerankonemat}{\texttt{Samplerankonemat}}
\newcommand{\Sampledotproduct}{\texttt{Sampledotproduct}}
\begin{minipage}[t]{8.2cm}
\DecMargin{1.5em}
 \begin{algorithm}[H]
   \begin{algorithmic}[1]\itemindent=0pc
   \caption{$\Samplevec(a, c)$\label{alg1}}
   \captionsetup{justification=centering}
   \FOR{ $t = 1 \dots c$ (i.i.d. trials)}

   \STATE randomly sample indices $j_t$ with\\
   \STATE $\Pr(j_t = j) = p_j = \frac{|a_{j}|}{\norm{a}_1}\quad \forall j \in [d]$\\
   \ENDFOR
   \STATE \textbf{Output} $\frac{1}{c}\sum_{t=1}^{c}\frac{a_{j_t}}{p_{j_t}}\hat{e}_{j_t}$ \\
   \vspace{0.15cm}
   \end{algorithmic}
  \end{algorithm}
  \IncMargin{1.5em}
\end{minipage}%
\begin{minipage}[t]{8.2cm}
\DecMargin{1.5em}
 \begin{algorithm}[H]

   \begin{algorithmic}[1]
   \caption{$\Sampledotproduct(a, x, c)$\label{alg2}}
   \FOR{ $t = 1 \dots c$ (i.i.d. trials)}
  \STATE randomly sample indices $j_t$ with\\
   \STATE $\Pr(j_t = j) = p_j = \frac{a_{j}^2}{\norm{\lowerthres{a}{c}}_2^2}\quad \forall j \in \botind{a}{c}$\\

   \ENDFOR
   \STATE \textbf{Output} $\upperthres{a}{c}^\top x + \frac{1}{c}\sum_{t=1}^{c}\frac{a_{j_t}x_{j_t}}{p_{j_t}}$
   \vspace{0.006cm}
   \end{algorithmic}
  \end{algorithm}
\IncMargin{1.5em}
\end{minipage}\\
\begin{minipage}[t]{8.2cm}
\null
\DecMargin{1.5em}
 \begin{algorithm}[H]
   \begin{algorithmic}[1]
\caption{$\Samplevecdotproduct(a, x, c)$\label{alg3}}
   \STATE $(\widehat{a})_{c}$ = $\Samplevec(a, c)$\\
   \STATE $(\widehat{a^\top x})_{c}$ = $\Sampledotproduct(a, x, c)$\\
   \STATE \textbf{Output} $(\widehat{a})_{c}(\widehat{a^\top x})_{c}$
   \vspace{1.783cm}
   \end{algorithmic}
  \end{algorithm}
  \IncMargin{1.5em}
\end{minipage}%
\begin{minipage}[t]{8.2cm}
\null
\DecMargin{1.5em}
 \begin{algorithm}[H]
   \begin{algorithmic}[1]
     \caption{$\Samplemat(\mat{A}, x, k)$\label{alg4}}
     \STATE $c_i = \sqrt{s_i}\cdot k \quad \forall i \in [n]$\\
     \STATE $M = \sum_i \norm{a_i}_2^2(1 + \frac{s_i}{c_i})$\\
     \STATE Select a row index $i$ with probability $p_i = \frac{\norm{a_i}_2^2}{M}(1 + \frac{s_i}{c_i})$\\
     \STATE $(\widehat{a_ia_i^\top x})_{c_i}$ = $\Samplerankonemat(a_i, x, c_i)$\\
     \STATE \textbf{Output} $\frac{1}{p_i}(\widehat{a_ia_i^\top x})_{c_i}$
   \end{algorithmic}
  \end{algorithm}
  \IncMargin{1.5em}
\end{minipage}
\begin{restatable}[Stochastic Approximation of $a$]{lemma}{honevariance}
\label{lemma:h1variance}
Let $a \in \R^d$ and $c \in \N$ and let our estimator be defined as $(\hat{a})_c = \Samplevec(a, c)$ (Algorithm \ref{alg1})
Then,
\[
\E[(\hat{a})_c] = a
~
\text{ and }
~
\E\left[\norm{(\hat{a})_c}_2^2\right] \leq \norm{a}_2^2\left(1 + \frac{s(a)}{c}\right)\]
\end{restatable}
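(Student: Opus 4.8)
The plan is to treat the estimator $(\hat{a})_c$ produced by $\Samplevec(a,c)$ (Algorithm~\ref{alg1}) as the average of $c$ i.i.d.\ single-sample estimators $X_t \defeq \frac{a_{j_t}}{p_{j_t}}\hat{e}_{j_t}$, where each $j_t$ is drawn according to $p_j = |a_j|/\norm{a}_1$, so that $(\hat{a})_c = \frac{1}{c}\sum_{t=1}^{c} X_t$. Since both claims concern only the first two moments, and since averaging $c$ i.i.d.\ unbiased copies preserves the mean while shrinking the variance by a factor of $c$, it suffices to compute $\E[X_1]$ and $\E[\norm{X_1}_2^2]$ in closed form and then assemble the average. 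First I would establish unbiasedness, then evaluate the single-sample second moment, and finally combine the two via a variance decomposition.

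For the mean, by definition $\E[X_1] = \sum_{j : a_j \neq 0} p_j \cdot \frac{a_j}{p_j}\hat{e}_j = \sum_{j : a_j \neq 0} a_j \hat{e}_j = a$, where coordinates with $a_j = 0$ are never sampled and contribute nothing; linearity of expectation then gives $\E[(\hat{a})_c] = \frac{1}{c}\sum_{t=1}^{c}\E[X_t] = a$.

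The crux is the second moment. Since $\norm{X_1}_2^2 = a_{j_1}^2/p_{j_1}^2$, substituting $p_j = |a_j|/\norm{a}_1$ gives $\E[\norm{X_1}_2^2] = \sum_{j : a_j \neq 0} p_j \cdot \frac{a_j^2}{p_j^2} = \sum_{j : a_j \neq 0}\frac{a_j^2}{p_j} = \norm{a}_1 \sum_{j} |a_j| = \norm{a}_1^2$. This single line is precisely where numerical sparsity enters, since dividing by $\norm{a}_2^2$ turns $\norm{a}_1^2$ into $s(a)\,\norm{a}_2^2$. To pass from one sample to the average, I would use the standard decomposition $\E[\norm{(\hat{a})_c}_2^2] = \norm{a}_2^2 + \frac{1}{c}\big(\E[\norm{X_1}_2^2] - \norm{a}_2^2\big)$, valid because $(\hat{a})_c - a$ is an average of $c$ i.i.d.\ mean-zero vectors. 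Plugging in $\E[\norm{X_1}_2^2] = \norm{a}_1^2$ yields $\norm{a}_2^2 + \frac{1}{c}(\norm{a}_1^2 - \norm{a}_2^2) \le \norm{a}_2^2 + \frac{\norm{a}_1^2}{c} = \norm{a}_2^2\big(1 + s(a)/c\big)$, which is the claimed bound (and in fact a touch stronger, since the $-\norm{a}_2^2/c$ term could be retained).

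I do not expect a deep obstacle here; the argument is short and self-contained. The only points needing care are (i) handling zero coordinates so that $a_j^2/p_j$ is never an indeterminate $0/0$, which is resolved by restricting every sum to the support of $a$, and (ii) justifying the variance-reduction factor $1/c$ for a vector-valued average, i.e.\ verifying that the cross terms $\E[\langle X_s - a,\, X_t - a\rangle]$ vanish for $s \neq t$ by independence. Both are routine, so I anticipate the genuine content of the proof to be the one-line evaluation $\E[\norm{X_1}_2^2] = \norm{a}_1^2$, which is exactly what converts the importance-sampling second moment into a numerical-sparsity bound.
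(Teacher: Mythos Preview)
Your proposal is correct and follows essentially the same approach as the paper: both write $(\hat{a})_c$ as an average of i.i.d.\ single-sample estimators $X_t$, compute $\E[X_t]=a$ and $\E[\norm{X_t}_2^2]=\norm{a}_1^2$, and then assemble the second moment of the average (you invoke the variance-of-an-average identity, the paper expands the cross terms $X_i^\top X_j$ directly, which are the same computation). Your observation that one can retain the $-\norm{a}_2^2/c$ term matches the paper's penultimate line $\norm{a}_2^2\frac{1}{c}(s(a)+(c-1))$ before it relaxes to the stated bound.
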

\begin{restatable}[Stochastic Approximation of $a^\top x$]{lemma}{htwovariance}
\label{lemma:h2variance}
Let $a, x \in \R^d$ and $c \in [d]$, and let our estimator be defined as $(\widehat{a^\top x})_c = \Sampledotproduct(a, x, c)$ (Algorithm \ref{alg2})
Then,
\[
\E[(\widehat{a^\top x})_c] = a^\top x
~
\text{ and }
~
\E\left[(\widehat{a^\top x})^2_c\right] \leq (a^\top x)^2 + \frac{1}{c}\norm{\lowerthres{a}{c}}^2_2\norm{x}_2^2
\]
\end{restatable}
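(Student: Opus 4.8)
The plan is to exploit the structure of Algorithm~\ref{alg2}, which writes the estimate as a fixed, deterministic ``head'' term $\upperthres{a}{c}^\top x$ plus an average of $c$ i.i.d.\ random ``tail'' terms. I would let $Y_t \defeq a_{j_t} x_{j_t} / p_{j_t}$ denote the contribution of the $t$-th sampled index, so that the output is $Z = \upperthres{a}{c}^\top x + \tfrac1c \sum_{t=1}^c Y_t$. First I would establish unbiasedness of a single tail sample: since $j_t$ ranges over $\botind{a}{c}$ with $\Pr(j_t = j) = p_j$, we get $\E[Y_t] = \sum_{j \in \botind{a}{c}} p_j \cdot (a_j x_j / p_j) = \sum_{j \in \botind{a}{c}} a_j x_j = \lowerthres{a}{c}^\top x$, because $\lowerthres{a}{c}$ is supported exactly on $\botind{a}{c}$ and agrees with $a$ there. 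Averaging and adding the head then gives $\E[Z] = \upperthres{a}{c}^\top x + \lowerthres{a}{c}^\top x = a^\top x$, which is the first claim.

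For the second moment I would expand $\E[Z^2] = (\upperthres{a}{c}^\top x)^2 + 2(\upperthres{a}{c}^\top x)\E[S] + \E[S^2]$ where $S = \tfrac1c\sum_t Y_t$. Writing $D \defeq \lowerthres{a}{c}^\top x = \E[Y_1]$, we have $\E[S] = D$, and the i.i.d.\ structure gives the variance-reduction identity $\E[S^2] = \tfrac1c \E[Y_1^2] + \tfrac{c-1}{c} D^2$, since the $c(c-1)$ ordered cross terms each factor as $\E[Y_t]\E[Y_{t'}] = D^2$ by independence. The one genuinely computational step is $\E[Y_1^2]$: with the importance-sampling choice $p_j = a_j^2 / \norm{\lowerthres{a}{c}}_2^2$, each squared term telescopes, $a_j^2 x_j^2 / p_j = \norm{\lowerthres{a}{c}}_2^2 \, x_j^2$, so that $\E[Y_1^2] = \norm{\lowerthres{a}{c}}_2^2 \sum_{j \in \botind{a}{c}} x_j^2 \le \norm{\lowerthres{a}{c}}_2^2 \norm{x}_2^2$. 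This telescoping is precisely the reason the distribution is chosen proportional to $a_j^2$.

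Finally I would assemble the pieces. Substituting and using $\upperthres{a}{c}^\top x + D = a^\top x$, the head-square, cross, and $\tfrac{c-1}{c}D^2$ terms recombine as $(\upperthres{a}{c}^\top x)^2 + 2(\upperthres{a}{c}^\top x)D + \tfrac{c-1}{c}D^2 = (a^\top x)^2 - \tfrac1c D^2$. Hence $\E[Z^2] = (a^\top x)^2 + \tfrac1c(\E[Y_1^2] - D^2)$, and dropping the nonnegative $-\tfrac1c D^2$ term together with the bound on $\E[Y_1^2]$ yields $\E[Z^2] \le (a^\top x)^2 + \tfrac1c \norm{\lowerthres{a}{c}}_2^2 \norm{x}_2^2$, as required.

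The main obstacle here is bookkeeping rather than any deep idea: the only real subtlety is tracking the cross terms carefully so that the deterministic head and the mean of the tail recombine into the exact leading term $(a^\top x)^2$. The naive move would be to bound $\E[Z^2]$ crudely by head-squared plus tail second moment, which would destroy the clean $(a^\top x)^2$ coefficient and leave an inflated estimate. The two structural facts that make everything collapse to the stated bound are the $1/c$ variance reduction from averaging $c$ i.i.d.\ samples and the telescoping induced by the $p_j \propto a_j^2$ importance weights.
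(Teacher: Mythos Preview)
Your proposal is correct and follows essentially the same route as the paper: both introduce the i.i.d.\ tail samples, compute their first and second moments using the $p_j \propto a_j^2$ cancellation, expand the square of the head-plus-average decomposition, and recombine the deterministic pieces into $(a^\top x)^2$. The only cosmetic difference is that the paper bounds $(1-\tfrac1c)D^2 \le D^2$ before recombining, whereas you carry the exact identity $\E[Z^2] = (a^\top x)^2 + \tfrac1c(\E[Y_1^2]-D^2)$ and drop $-\tfrac1c D^2$ at the end; the content is identical.
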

\begin{restatable}[Stochastic Approximation of $aa^\top x$]{lemma}{hthreevariance}
\label{lemma:h3variance}
Let $a, x \in \R^d$ and $c \in [d]$, and the estimator be defined as $(\widehat{aa^\top x})_c = \Samplevecdotproduct(a, x, c)$ (Algorithm \ref{alg3})
Then,
\[
\E[(\widehat{aa^\top x})_c] = aa^\top x
~
\text{ and }
~
\E\left[\norm{(\widehat{aa^\top x})_c}_2^2\right] \leq \norm{a}_2^2 \left(1 + \frac{s(a)}{c}\right)\left( (a^\top x)^2 + \frac{s(a)}{c^2}\norm{a}^2_2\norm{x}_2^2 \right)
\]
\end{restatable}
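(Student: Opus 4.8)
The estimator output by $\Samplerankonemat$ is the product of the vector estimate $(\hat{a})_c = \Samplevec(a,c)$ and the scalar estimate $(\widehat{a^\top x})_c = \Sampledotproduct(a, x, c)$. The key structural observation, which I would establish first, is that these two quantities are built from independent random draws (one set of indices for the vector estimate, a separate set for the dot-product estimate), hence they are independent random variables. Everything else follows by combining the guarantees already proved in Lemmas~\ref{lemma:h1variance} and \ref{lemma:h2variance} with the numerical sparsity bound of Lemma~\ref{lemma:sparsityratio}.

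For unbiasedness, since $(\widehat{a^\top x})_c$ is a scalar independent of the vector $(\hat{a})_c$, I would factor the expectation:
\[
\E\left[(\widehat{aa^\top x})_c\right] = \E\left[(\hat{a})_c (\widehat{a^\top x})_c\right] = \E\left[(\hat{a})_c\right]\, \E\left[(\widehat{a^\top x})_c\right] = a\,(a^\top x) = aa^\top x,
\]
plugging in the two unbiasedness claims of the earlier lemmas.

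For the second moment, I would use that the scalar factor pulls out of the norm as its square, so $\norm{(\hat{a})_c(\widehat{a^\top x})_c}_2^2 = \norm{(\hat{a})_c}_2^2\,(\widehat{a^\top x})_c^2$; by independence the expectation of this product factorizes, and applying the variance bounds of Lemmas~\ref{lemma:h1variance} and \ref{lemma:h2variance} gives
\[
\E\left[\norm{(\widehat{aa^\top x})_c}_2^2\right] \leq \norm{a}_2^2\left(1 + \frac{s(a)}{c}\right)\left((a^\top x)^2 + \frac{1}{c}\norm{\lowerthres{a}{c}}_2^2\norm{x}_2^2\right).
\]
The final step is to replace the term $\frac{1}{c}\norm{\lowerthres{a}{c}}_2^2$ by $\frac{s(a)}{c^2}\norm{a}_2^2$ using Lemma~\ref{lemma:sparsityratio}, which states $\norm{\lowerthres{a}{c}}_2^2 \leq \norm{a}_2^2\, s(a)/c$; this matches the claimed bound exactly.

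There is no real obstacle here: the difficulty has been front-loaded into the three lemmas invoked. The only points requiring care are correctly exploiting the independence of the two sub-estimators (so that both the expectation of the product and the expectation of the squared norm factorize, the latter after pulling out the scalar as its square) and invoking the numerical sparsity lemma at precisely the right moment to convert the tail mass $\norm{\lowerthres{a}{c}}_2^2$ into the stated $s(a)/c$ savings.
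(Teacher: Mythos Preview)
Your proposal is correct and mirrors the paper's proof essentially step for step: factor the expectation and the second moment using independence of $(\hat a)_c$ and $(\widehat{a^\top x})_c$, apply Lemmas~\ref{lemma:h1variance} and~\ref{lemma:h2variance}, then invoke Lemma~\ref{lemma:sparsityratio} to convert $\frac{1}{c}\norm{\lowerthres{a}{c}}_2^2$ into $\frac{s(a)}{c^2}\norm{a}_2^2$.
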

\begin{restatable}[Stochastic Approximation of $\mat{A}^\top \mat{A}x$ ]{lemma}{hfourvariance}
\label{lemma:h4variance}
Let $\mat{A} \in \R^{n \times d}$ with rows $a_1, a_2, \ldots, a_n$ and $x \in \R^d$ and
 let $(\widehat{\mat{A}^\top \mat{A}x})_k = \Samplemat(\mat{A}, x, k)$ (Algorithm \ref{alg4}) where $k$ is some parameter.
Then,
\[
\E\left[(\widehat{\mat{A}^\top \mat{A}x})_k\right] = \mat{A}^\top \mat{A}x
~\text{ and } ~
\E\left[\normFull{(\widehat{\mat{A}^\top \mat{A}x})_k}_2^2\right] \leq M\left(\norm{\mat{A}x}_2^2 + \frac{1}{k^2}\norm{\mat{A}}^2_F\norm{x}_2^2\right)
\]
\end{restatable}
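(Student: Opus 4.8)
The plan is to prove both claims by conditioning on the randomly selected row index $i$ and invoking Lemma~\ref{lemma:h3variance} for the inner estimator, which handles all of the per-row randomness. The key structural observation is that $\Samplemat$ is exactly an importance-sampling scheme: it picks row $i$ with probability $p_i$, runs the unbiased rank-one estimator $(\widehat{a_ia_i^\top x})_{c_i} = \Samplerankonemat(a_i,x,c_i)$, and rescales by $1/p_i$. So by the tower property, everything reduces to summing the Lemma~\ref{lemma:h3variance} guarantees against the distribution $\{p_i\}$, and the whole proof is driven by two cancellations built into the algorithm's parameter choices $c_i = \sqrt{s_i}\,k$ and $p_i = \tfrac{\norm{a_i}_2^2}{M}(1 + s_i/c_i)$.

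For unbiasedness, I would write, using the law of total expectation over the choice of $i$ followed by the internal randomness of $\Samplerankonemat$,
\[
\E\left[(\widehat{\mat{A}^\top \mat{A}x})_k\right] = \sum_{i \in [n]} p_i \cdot \frac{1}{p_i}\,\E\left[(\widehat{a_ia_i^\top x})_{c_i}\right] = \sum_{i \in [n]} a_i a_i^\top x = \mat{A}^\top\mat{A}x,
\]
where the middle equality is exactly the unbiasedness half of Lemma~\ref{lemma:h3variance}.

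For the second moment, the $1/p_i$ rescaling contributes a $1/p_i^2$ factor to the squared norm, so after conditioning I get $\E\big[\normFull{(\widehat{\mat{A}^\top \mat{A}x})_k}_2^2\big] = \sum_i \tfrac{1}{p_i}\,\E\big[\norm{(\widehat{a_ia_i^\top x})_{c_i}}_2^2\big]$. Applying the variance bound of Lemma~\ref{lemma:h3variance}, the factor $\norm{a_i}_2^2(1 + s_i/c_i)$ in that bound cancels precisely against $1/p_i = M/(\norm{a_i}_2^2(1+s_i/c_i))$, leaving each term bounded by $M\big((a_i^\top x)^2 + \tfrac{s_i}{c_i^2}\norm{a_i}_2^2\norm{x}_2^2\big)$. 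Now I substitute $c_i = \sqrt{s_i}\,k$, which makes $s_i/c_i^2 = 1/k^2$ uniformly across rows, and sum using $\sum_i (a_i^\top x)^2 = \norm{\mat{A}x}_2^2$ and $\sum_i \norm{a_i}_2^2 = \norm{\mat{A}}_F^2$ to obtain $M\big(\norm{\mat{A}x}_2^2 + \tfrac{1}{k^2}\norm{\mat{A}}_F^2\norm{x}_2^2\big)$, as claimed.

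There is no genuine obstacle here: the argument is a mechanical two-line application of the tower property plus Lemma~\ref{lemma:h3variance}, and all of the real work was front-loaded into designing $p_i$ and $c_i$. The only points requiring care are bookkeeping ones — verifying that $M$ (a constant independent of $i$) factors cleanly out of the sum, and confirming that the sampling probabilities are well-defined, i.e. $\sum_i p_i = 1$, which holds by the definition $M = \sum_i \norm{a_i}_2^2(1 + s_i/c_i)$. I would state these cancellations explicitly since they are exactly what motivates the otherwise unusual-looking choices of $p_i$ and $c_i$ in Algorithm~\ref{alg4}.
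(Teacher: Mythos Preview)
Your proposal is correct and follows essentially the same approach as the paper: condition on the row index, apply the unbiasedness and second-moment bounds of Lemma~\ref{lemma:h3variance}, then exploit the cancellation of $\norm{a_i}_2^2(1+s_i/c_i)$ against $1/p_i$ and the identity $s_i/c_i^2 = 1/k^2$ from $c_i = \sqrt{s_i}\,k$ before summing over rows. Your exposition of the two designed cancellations is, if anything, more explicit than the paper's.
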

%

\section{Applications}
Using the framework of SVRG defined in
Theorem \ref{thrm:SVRG} and the sampling techniques presented in Section~\ref{section:sampling}, we now state how we solve our problems of regression and top eigenvector computation.
\subsection{Eigenvector computation}
The classic method to estimate the top eigenvector of a matrix is to apply \emph{power method}. This method starts with an initial vector $x_0$ and repeatedly multiplies the vector by $\mat{A}^\top \mat{A}$ which eventually leads to convergence of the vector to the top eigenvector of the matrix $\mat{A}^\top \mat{A}$ if top eigenvalue of the matrix is well separated from the other eigenvalues i.e. $\gap$ is large enough. The number of iterations required for convergence is $O(\log(\frac{d}{\epsilon})/\gap)$. However, this method can be very slow when the gap is small.

If the gap is small, improved convergence rates can be achieved by applying power method to the matrix $\mat{B}^{-1}$ where $\mat{B} = \lambda \mat{I} - \mat{A}^\top \mat{A}$ \cite{musco2015randomized}. $\mat{B^{-1}}$ has the same largest eigenvector as $\mat{A}^\top \mat{A}$ and the eigenvalue gap is ${(\frac{1}{\lambda - \lambda_1} - \frac{1}{\lambda - \lambda_2})}/{\frac{1}{\lambda - \lambda_1}} = \frac{1}{2}$ if $\lambda \approx (1 + \gap)\lambda_1$ and thus we get a constant eigenvalue gap. Hence, if we have
a rough estimate of the largest eigenvalue of the matrix, we can get the gap parameter
as roughly constant. Section 6 of \cite{garber2016faster} shows how we can get such an estimate based on the gap free eigenvalue estimation algorithm by \cite{musco2015randomized} in running time dependent on the linear system solver of $\mat{B}$ ignoring some additional
polylogarithmic factors. However, doing power iteration on $\mat{B}^{-1}$ requires solving linear systems on $\mat{B}$ whose condition number now depends on $\frac{1}{\gap}$ and thus, solving linear system on $\mat{B}$ would become expensive now. \cite{garber2016faster} showed how we can solve the linear systems in $\mat{B}$ faster by the idea of using SVRG \cite{johnson2013accelerating} and achieve a better overall running time for top eigenvector computation. The formal theorem statement is differed to Theorem \ref{thrm:garber} in the appendix.

To achieve our improved running times for eigenvector computation, we simply use this framework for solving the eigenvector problem using
SVRG and on the top of that, give different sampling scheme (presented in Lemma~\ref{lemma:h4variance}) for SVRG for $\mat{B}^{-1}$ which reduces the runtime for
numerically sparse matrices. The following lemma states the variance bound that we get for the gradient updates for SVRG for
the top eigenvector computation problem. This is used to obtain a bound on the solving of linear systems in
$\mat{B} \defeq \lambda \mat{I} - \mat{A}^\top \mat{A}$ which will be ultimately used in solving the approximate top eigenvector problem.
The proof of this appears in Apppendix \ref{sec:app_proofs}.
\begin{restatable}[Variance bound for top eigenvector computation]{lemma}{vareigvecot} \label{lemma:vareigvecot}
Let $\nabla g(x) \defeq \lambda x - (\widehat{\mat{A}^\top \mat{A}x})_k$ be the estimator of $\nabla f(x)$
where $f(x) = \frac{1}{2}x^\top \mat{B}x - b^\top x$ and $\mat{B} = \lambda\mat{I} - \mat{A}^\top\mat{A}$ and
$(\widehat{\mat{A}^\top \mat{A}x})_k$ is the estimator of $\mat{A}^\top \mat{A}x$ defined in Lemma \ref{lemma:h4variance},
and $k = \sqrt{\sr(\mat{A})}$, then we get
\[
\E[\nabla g(x)] = (\lambda \mat{I}- \mat{A}^\top \mat{A})x
~\text{ and } ~
\E\left[\norm{\nabla g(x) - \nabla g(x^*)}_2^2\right] \leq (f(x) - f(x^*))\frac{8M}{\gap}
~\text{ and } ~
\]
\[
T = \sum_i \frac{\norm{a_i}_2^2\left(s_i + \sqrt{s_i\sr(\mat{A})}\right)}{M}
\]
where T is the average time taken in calculating $\nabla g(x)$ and
 $M = \sum_i\norm{a_i}_2^2\left(1 + \sqrt{\frac{s_i}{\sr(\mat{A})}}\right)$ assuming $ \left(1 + \frac{\gap}{150}\right)\lambda_1  \leq \lambda \leq 2\lambda_1$.
\end{restatable}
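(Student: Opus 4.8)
The plan is to establish the three claims in turn, treating unbiasedness and the running time as short computations and reserving the real work for the variance bound. Unbiasedness is immediate once one observes that \Samplevec and \Sampledotproduct draw their indices from distributions depending only on $\mat{A}$ (never on $x$) and return outputs that are \emph{linear} in $x$; hence, conditioned on all of its random choices, the estimator $(\widehat{\mat{A}^\top \mat{A}x})_k$ produced by \Samplemat is a fixed linear map applied to $x$. Combined with the unbiasedness guaranteed by Lemma~\ref{lemma:h4variance}, this gives $\E[\nabla g(x)] = \lambda x - \E[(\widehat{\mat{A}^\top \mat{A}x})_k] = (\lambda \mat{I} - \mat{A}^\top \mat{A})x$.

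For the variance bound I would first reduce to a single point by coupling the internal randomness used to evaluate the estimator at $x$ and at $x^*$. Using the conditional linearity noted above,
\[
\nabla g(x) - \nabla g(x^*) = \lambda y - (\widehat{\mat{A}^\top \mat{A}y})_k, \qquad y \defeq x - x^*,
\]
so it suffices to bound $\E\norm{\lambda y - (\widehat{\mat{A}^\top \mat{A}y})_k}_2^2$. Expanding the square and using $\E[(\widehat{\mat{A}^\top \mat{A}y})_k] = \mat{A}^\top \mat{A}y$ to eliminate the cross term leaves $\lambda^2\norm{y}_2^2 - 2\lambda\norm{\mat{A}y}_2^2 + \E\norm{(\widehat{\mat{A}^\top \mat{A}y})_k}_2^2$. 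I then apply Lemma~\ref{lemma:h4variance} with $k = \sqrt{\sr(\mat{A})}$, for which $\tfrac{1}{k^2}\norm{\mat{A}}_F^2 = \lambda_1$, to get $\E\norm{(\widehat{\mat{A}^\top \mat{A}y})_k}_2^2 \le M(\norm{\mat{A}y}_2^2 + \lambda_1\norm{y}_2^2)$, and hence
\[
\E\big\|\nabla g(x) - \nabla g(x^*)\big\|_2^2 \le (\lambda^2 + M\lambda_1)\norm{y}_2^2 + (M - 2\lambda)\norm{\mat{A}y}_2^2 .
\]

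To finish I use that $f$ is a quadratic minimized at $x^*$, so $f(x) - f(x^*) = \tfrac12 y^\top \mat{B} y = \tfrac12(\lambda\norm{y}_2^2 - \norm{\mat{A}y}_2^2)$, together with the elementary range $0 \le \norm{\mat{A}y}_2^2 \le \lambda_1\norm{y}_2^2$. Writing $P = \norm{y}_2^2$ and $Q = \norm{\mat{A}y}_2^2$, the desired inequality becomes linear in $Q$ on $[0,\lambda_1 P]$, and since the relevant expression is decreasing in $Q$ the binding case is $Q = \lambda_1 P$, i.e.\ the direction $y$ aligned with the top eigenvector. This is precisely where both spectral hypotheses enter: the lower shift $\lambda \ge (1+\gap/150)\lambda_1$ yields $\lambda - \lambda_1 \ge \tfrac{\gap}{150}\lambda_1$ and thereby lower-bounds $f(x)-f(x^*)$ away from zero in this near-degenerate direction, while $\lambda \le 2\lambda_1$ controls the leading term via $\lambda^2 - 2\lambda\lambda_1 \le 0$. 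I expect this constant bookkeeping around the $Q = \lambda_1 P$ regime to be the main obstacle, since it is the only direction in which $f(x)-f(x^*)$ fails to dominate the $M\lambda_1\norm{y}_2^2$ contribution by a comfortable margin, and it is what forces the precise relationship between the shift parameter and the factor $\tfrac{8M}{\gap}$.

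Finally, $T$ is the expected per-query cost of \Samplemat. Given a precomputed sorted order of each row (and the sampling data structures absorbed into $\otilde$), selecting row $i$ and running \Samplerankonemat with $c_i$ samples costs $\otilde(c_i)$: drawing $c_i$ samples in each of \Samplevec and \Sampledotproduct plus forming the deterministic inner product $\upperthres{a_i}{c_i}^\top x$ over the $c_i$ stored top coordinates. Taking expectation over the row distribution $p_i = \tfrac{\norm{a_i}_2^2}{M}(1 + s_i/c_i)$ and substituting $c_i = \sqrt{s_i}\,k = \sqrt{s_i\,\sr(\mat{A})}$ gives $T = \sum_i p_i \cdot O(c_i) = \sum_i \tfrac{\norm{a_i}_2^2}{M}(c_i + s_i) = \sum_i \tfrac{\norm{a_i}_2^2(s_i + \sqrt{s_i\,\sr(\mat{A})})}{M}$, as claimed.
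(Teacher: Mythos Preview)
Your proposal is correct and follows essentially the same route as the paper. The only cosmetic difference is in the variance step: the paper drops the negative cross term $-2\lambda\norm{\mat{A}y}_2^2$ at once and then bounds $\norm{y}_2^2$ and $\norm{\mat{A}y}_2^2$ separately via the minimum eigenvalue of $\mat{B}$ (i.e.\ $\norm{y}_2^2 \le \tfrac{1}{\lambda-\lambda_1}\norm{y}_{\mat{B}}^2$ and $\norm{\mat{A}y}_2^2 \le \tfrac{\lambda_1}{\lambda-\lambda_1}\norm{y}_{\mat{B}}^2$), whereas you keep the cross term and check the linear inequality at the endpoint $Q=\lambda_1 P$; these are the same computation, since that endpoint is precisely where the paper's spectral bounds are tight.
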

Now, using the variance of the gradient estimators and per iteration running time $T$ obtained in Lemma \ref{lemma:vareigvecot} along with the framework of SVRG \cite{johnson2013accelerating} (defined in Theorem \ref{thrm:SVRG}), we can get
constant multiplicative decrease in the error in solving linear systems in
$\mat{B} = \lambda \mat{I} - \mat{A}^\top \mat{A}$ in total running time\\
$O\left(\nnz(\mat{A}) + \frac{2}{\gap^2\lambda_1(\mat{A}^\top \mat{A})}\sum_i \norm{a_i}_2^2(\sqrt{s_i} + \sqrt{\sr(\mat{A})})\sqrt{s_i}\right)$ assuming $\lambda$ is a crude approximation to the top eigenvalue of $\mat{A}$.
The formal theorem statement (Theorem \ref{thrm:eigsvrgruntime}) and proof are differed to the appendix.
Now, using the linear system solver described above along with the shift and invert algorithmic framework, we get the following running time for top eigenvector computation problem. The proof of this theorem appears in Appendix \ref{sec:app_proofs}.
\begin{restatable}[Numerically Sparse Top Eigenvector Computation Runtime]{theorem}{eigruntime}
  \label{thrm:eigruntime} Linear system solver from Theorem \ref{thrm:eigsvrgruntime} combined with the
  shift and invert framework from \cite{garber2016faster} stated in Theorem \ref{thrm:garber} gives an algorithm which computes $\epsilon$-approximate top eigenvector (Definition \ref{problem:eigenvector})
  in total running time $$
  O\left(\left(\nnz(\mat{A}) + \frac{1}{\gap^2\lambda_1}\sum_i \norm{a_i}_2^2\left(\sqrt{s_i} + \sqrt{\sr(\mat{A})}\right)\sqrt{s_i}\right)\cdot\left( \log^2\left(\frac{d}{\gap}\right) + \log\left(\frac{1}{\epsilon}\right) \right)\right) ~
  $$
\end{restatable}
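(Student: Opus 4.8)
The plan is to treat this as a black-box composition of two results already in hand: the linear-system solver of Theorem~\ref{thrm:eigsvrgruntime}, which achieves a constant multiplicative reduction in the error of solving systems in $\mat{B} = \lambda\mat{I} - \mat{A}^\top\mat{A}$ in time
\[
T_{\mathrm{solve}} \defeq O\left(\nnz(\mat{A}) + \frac{1}{\gap^2\lambda_1}\sum_i \norm{a_i}_2^2\left(\sqrt{s_i} + \sqrt{\sr(\mat{A})}\right)\sqrt{s_i}\right),
\]
together with the shift-and-invert reduction of Theorem~\ref{thrm:garber}, which reduces computing an $\epsilon$-approximate top eigenvector to solving a controlled number of linear systems in $\mat{B}$. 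Since the hypothesis on $\lambda$ forces $\mat{B} \succ 0$, each such system is a strongly convex quadratic and the solver applies. The argument is then essentially bookkeeping: substitute $T_{\mathrm{solve}}$ into the reduction and count the number of solves and the accuracy demanded of each.

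First I would observe that driving the relative error of a single solve below $\delta$ costs only $O(\log(1/\delta))$ repetitions of the constant-decrease solver, i.e. $O(T_{\mathrm{solve}}\log(1/\delta))$. Next I would invoke Theorem~\ref{thrm:garber} in two phases. In the eigenvalue-estimation phase, a crude shift $\lambda$ satisfying the hypothesis $(1+\gap/150)\lambda_1 \le \lambda \le 2\lambda_1$ of Lemma~\ref{lemma:vareigvecot} is produced by the gap-free procedure of \cite{musco2015randomized}; resolving eigenvalues separated by $\gap$ requires $O(\log(d/\gap))$ solves, each to accuracy $\mathrm{poly}(\gap/d)$ and hence costing $O(T_{\mathrm{solve}}\log(d/\gap))$, for a total of $O(T_{\mathrm{solve}}\log^2(d/\gap))$. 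In the eigenvector-refinement phase, with the shift fixed $\mat{B}^{-1}$ has constant relative gap, so inverse power iteration converges geometrically and reaches an $\epsilon$-approximate eigenvector in $O(\log(1/\epsilon))$ iterations; the amplified gap lets each step tolerate a constant-accuracy solve, so this phase contributes only $O(T_{\mathrm{solve}}\log(1/\epsilon))$. Summing yields the stated bound $O\bigl(T_{\mathrm{solve}}(\log^2(d/\gap) + \log(1/\epsilon))\bigr)$.

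The main obstacle I anticipate is not the arithmetic but verifying that the accuracy guarantees genuinely compose across the interface. Theorem~\ref{thrm:eigsvrgruntime} controls the solve in function error, equivalently the $\mat{B}$-norm of $x - x^*$ inherited from the variance bound of Lemma~\ref{lemma:vareigvecot}, whereas the robustness statement inside Theorem~\ref{thrm:garber} is phrased in terms of the accuracy of the approximate product $\mat{B}^{-1}x$; I would need to check that the former implies the latter at the required tolerance, using that $\lambda$ is within a constant factor of $\lambda_1$ so that $\mat{B}$ is well-conditioned on the relevant subspace and the two error measures are comparable up to factors already absorbed by the logarithms. A secondary subtlety is the mild circularity that Lemma~\ref{lemma:vareigvecot} itself presumes a constant-factor estimate of $\lambda_1$, which is precisely the output of the estimation phase; I would confirm that interleaving the estimation of $\lambda$ with the solves that depend on it does not disturb the $O(\log^2(d/\gap))$ accounting. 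Once these accuracy-interface checks are discharged, the theorem follows by the substitution and summation above.
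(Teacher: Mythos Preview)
Your approach is correct and is exactly the paper's: the proof there is a single line, ``This follows directly from combining Theorem~\ref{thrm:garber} and Theorem~\ref{thrm:eigsvrgruntime}.'' The interface concern you raise is a non-issue, since Theorem~\ref{thrm:garber} is already stated as requiring a solver that halves $\E\norm{x-x^*}_{\mat{B}}^2$, which is precisely the guarantee Theorem~\ref{thrm:eigsvrgruntime} delivers; there is no translation between error measures to perform, and the $\log^2(d/\gap)+\log(1/\epsilon)$ factor is part of the black-box statement of Theorem~\ref{thrm:garber} rather than something you must rederive.
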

The following accelerated running time for top eigenvector computation follows from using the general acceleration framework of
\cite{frostig2015regularizing} mentioned in Theorem~\ref{thrm:acceleration} in the appendix along with the linear system solver runtime
and the proof appears in Appendix \ref{sec:app_proofs}.

\begin{restatable}[Numerically Sparse Accelerated Top Eigenvector Computation Runtime]{theorem}{acceigruntime}
   \label{thrm:acceigruntime} Linear system solver from Theorem \ref{thrm:eigsvrgruntime} combined with acceleration framework from
   \cite{frostig2015regularizing} mentioned in Theorem \ref{thrm:acceleration} and
   shift and invert framework from \cite{garber2016faster} stated in Theorem \ref{thrm:garber} gives an algorithm which computes $\epsilon$-approximate top eigenvector (Definition \ref{problem:eigenvector})
in total running time
$$
\tilde{O}\left(\nnz(\mat{A}) + \frac{\nnz(\mat{A})^{\frac{3}{4}}}{\sqrt{\gap}}\left({\sum_i \frac{\norm{a_i}_2^2}{\lambda_1}\left(\sqrt{s_i} + \sqrt{\sr(\mat{A})}\right)\sqrt{s_i})}\right)^{\frac{1}{4}}\right)
$$ where $\tilde{O}$ hides a factor of $\left( \log^2\left(\frac{d}{\gap}\right) + \log\left(\frac{1}{\epsilon}\right)\right)\log\left(\frac{d }{\gap}\right)$.
\end{restatable}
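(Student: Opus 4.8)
The plan is to obtain the accelerated bound purely by composing three results as black boxes: the SVRG-based linear-system solver of Theorem \ref{thrm:eigsvrgruntime} (whose per-iteration time $T$, variance coefficient $8M/\gap$, and strong-convexity parameter are supplied by Lemma \ref{lemma:vareigvecot}), the approximate-proximal-point / catalyst acceleration of Theorem \ref{thrm:acceleration}, and the shift-and-invert reduction of Theorem \ref{thrm:garber}. Concretely, I would first record that Theorem \ref{thrm:eigsvrgruntime} solves a system in $\mat{B} = \lambda\mat{I} - \mat{A}^\top\mat{A}$ to constant multiplicative accuracy in time $\tilde{O}(\nnz(\mat{A}) + C)$, where $C \defeq \frac{1}{\gap^2\lambda_1}\sum_i\norm{a_i}_2^2(\sqrt{s_i}+\sqrt{\sr(\mat{A})})\sqrt{s_i}$, and that this $C$ already packages the choice $k=\sqrt{\sr(\mat{A})}$ from Lemma \ref{lemma:vareigvecot} together with the identity $M\cdot T = \sum_i\norm{a_i}_2^2(\sqrt{s_i}+\sqrt{\sr(\mat{A})})\sqrt{s_i}$. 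The goal is then to show that acceleration turns the additive $C$ into $\nnz(\mat{A})^{3/4}C^{1/4}$.

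Second, I would invoke Theorem \ref{thrm:acceleration} on this solver. The acceleration runs accelerated proximal point, solving a sequence of regularized systems in $\mat{B}+\sigma\mat{I}$; since the regularizer $\tfrac{\sigma}{2}\norm{x-y}_2^2$ contributes only a deterministic gradient term, the sampling scheme (Algorithm \ref{alg4}) and hence $M$ and $T$ are unchanged, while the strong-convexity parameter rises from $\Theta(\gap\lambda_1)$ to $\Theta(\gap\lambda_1)+\sigma$. The accelerated loop runs $\tilde{O}(\sqrt{1+\sigma/(\gap\lambda_1)})$ outer iterations, each an inner solve costing $\tilde{O}(\nnz(\mat{A}) + C')$, where $C'$ is the inner cost recomputed with the improved strong convexity. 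I would then optimize the free shift/regularization (equivalently the slack in the window $(1+\gap/150)\lambda_1\le\lambda\le2\lambda_1$) to balance the $\sqrt{\cdot}$-inflated full-gradient term against the shrinking inner term. I expect the exponents to come out $\tfrac34,\tfrac14$ rather than the generic $\tfrac12,\tfrac12$ precisely because the unaccelerated inner condition number scales as $1/\gap^2$ while the strong convexity scales only as $\gap\lambda_1$: this quadratic-versus-linear gap dependence makes the inner cost drop faster than the outer count grows, so the optimized balance yields $\nnz(\mat{A})^{3/4}C^{1/4}$ in place of $\sqrt{\nnz(\mat{A})\cdot C}$.

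Third, I would feed the resulting accelerated solver into the shift-and-invert framework of Theorem \ref{thrm:garber}: a rough estimate of $\lambda_1$ (Section~6 of \cite{garber2016faster}, via the gap-free estimator of \cite{musco2015randomized}) fixes a $\lambda$ for which $\mat{B}^{-1}$ has constant relative gap, and $O(1)$ inverse-power iterations on $\mat{B}^{-1}$, each realized by one accelerated solve, produce the $\epsilon$-approximate top eigenvector. Each solve is required only to accuracy polylogarithmic in $d/\gap$ and $1/\epsilon$, so $\log(1/\epsilon)$ and one $\log(d/\gap)$ enter multiplicatively, the eigenvalue-estimation and power-iteration counts contribute the $\log^2(d/\gap)$, and the extra $\log(d/\gap)$ over the unaccelerated Theorem \ref{thrm:eigruntime} arises from the accelerated outer loop's inner-accuracy requirement; collecting these reproduces the stated $\tilde{O}$ factor $(\log^2(d/\gap)+\log(1/\epsilon))\log(d/\gap)$.

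I expect the main obstacle to be verifying that the interface required by Theorem \ref{thrm:acceleration} is actually met and composes correctly with shift-and-invert, rather than any single calculation. In particular I must check that the variance bound of Lemma \ref{lemma:vareigvecot} (which is clean here, of the form $\le\frac{8M}{\gap}(f(x)-f(x^*))$, unlike the regression case) survives regularization with the same coefficient, that solving the regularized system to the accuracy demanded by acceleration still only requires the warm-started SVRG of Theorem \ref{thrm:eigsvrgruntime}, and that the shift/regularization used for acceleration stays consistent with the window $(1+\gap/150)\lambda_1\le\lambda\le2\lambda_1$ under which $M$, $T$, and the variance bound were derived. Once these compatibility conditions are confirmed, the running time follows immediately from the optimized balance above.
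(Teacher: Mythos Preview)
Your high-level plan matches the paper's exactly: apply the SVRG solver of Theorem~\ref{thrm:eigsvrgruntime} to the regularized subproblems, wrap with the acceleration of Theorem~\ref{thrm:acceleration}, then feed into the shift-and-invert reduction of Theorem~\ref{thrm:garber}. The paper carries this out by observing that $\mat{B}+\gamma\mat{I}=(\lambda+\gamma)\mat{I}-\mat{A}^\top\mat{A}$ is structurally identical to $\mat{B}$, so Theorem~\ref{thrm:eigsvrgruntime} re-applies with $(\lambda+\gamma-\lambda_1)$ in place of $\gap\,\lambda_1$, giving inner cost $\nnz(\mat{A})+\frac{2\lambda_1}{(\lambda+\gamma-\lambda_1)^2}\sum_i\norm{a_i}_2^2(\sqrt{s_i}+\sqrt{\sr})\sqrt{s_i}$; it then balances explicitly by choosing $\gamma=\sqrt{\frac{2\lambda_1}{\nnz(\mat{A})}\sum_i\norm{a_i}_2^2(\sqrt{s_i}+\sqrt{\sr})\sqrt{s_i}}$.

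There is, however, one genuine inconsistency in your argument that would make it fail as written. You assert both that ``the variance bound of Lemma~\ref{lemma:vareigvecot} \ldots\ survives regularization with the same coefficient'' and that only ``the strong-convexity parameter rises from $\Theta(\gap\lambda_1)$ to $\Theta(\gap\lambda_1)+\sigma$.'' If the SVRG variance parameter stays at $\sigma^2=O(M/\gap)$ and only $\mu$ improves, the inner iteration count $\sigma^2/\mu$ drops only \emph{linearly} in the regularizer, and balancing against the $\sqrt{\gamma/(\gap\lambda_1)}$ outer count yields the generic $\nnz^{1/2}C^{1/2}$, not the $\nnz^{3/4}C^{1/4}$ you (correctly) expect. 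The $3/4,1/4$ split requires that \emph{both} factors of $(\lambda-\lambda_1)$ in the $1/\gap^2$ improve: one is the strong convexity, but the other sits inside the variance coefficient itself. Inspect the proof of Lemma~\ref{lemma:vareigvecot}: the step $\norm{x-x^*}_2^2\le\frac{1}{\lambda-\lambda_1}\norm{x-x^*}_{\mat{B}}^2$ is what turns the raw second-moment bound into the $8M/\gap$ coefficient, and with shift $\lambda+\gamma$ that coefficient becomes $O(M\lambda_1/(\lambda+\gamma-\lambda_1))$. The clean fix is exactly the paper's: do not treat the regularizer as an add-on, but absorb it into the shift and re-invoke Theorem~\ref{thrm:eigsvrgruntime} at $\lambda'=\lambda+\gamma$, which packages both improvements at once. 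Note this uses the computation underlying Lemma~\ref{lemma:vareigvecot} beyond the stated window $\lambda\le 2\lambda_1$ (the optimal $\gamma$ can push $\lambda+\gamma$ past $2\lambda_1$); the paper works directly with the $(\lambda+\gamma-\lambda_1)^{-2}$ form rather than the $\gap$-parametrized statement.
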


\subsection{Linear Regression}
In linear regression, we want to minimize $\frac{1}{2}\norm{\mat{A}x - b}_2^2$ which is equivalent to minimizing
$$
\frac{1}{2}x^\top \mat{A}^\top \mat{A}x - x^\top \mat{A}^\top b = \frac{1}{2}\sum_i x^\top a_i{a_i}^\top x - x^\top \mat{A}^\top b
$$
and hence, we can apply the framework of SVRG \cite{johnson2013accelerating} (stated in Theorem \ref{thrm:SVRG}) for solving it. However, instead of selecting a complete row for calculating the gradient, we only select a
few entries from the row to achieve lower cost per iteration. In particular, we
use the distribution defined in Lemma \ref{lemma:h4variance}. Note that the sampling probabilities depend on $\lambda_d$ and we need to know a
constant factor approximation of $\lambda_d$ for the scheme to work. For most of the ridge regression problems, we know a lower bound on the
value of $\lambda_d$ and we can get an approximation by doing a binary search over all the values and paying an extra logarithmic factor.
The following lemma states the sampling distribution which we use for approximating the true gradient and the corresponding variance that we obtain.
The proof of this appears in Appendix \ref{sec:app_proofs}.
\begin{restatable}[Variance Bound for Regression]{lemma}{varlinreg} Let $\nabla g(x) \defeq (\widehat{\mat{A}^\top \mat{A}x})_k$ be the estimator of $\nabla f(x)$ where $f(x) = \frac{1}{2}\norm{\mat{A}x - b}_2^2$ and $(\widehat{\mat{A}^\top \mat{A}x})_k$ is the estimator for $\mat{A}^\top \mat{A}x$ defined in Lemma \ref{lemma:h4variance} and $k = \sqrt{\kappa} $, assuming $\kappa \leq d^2$ we get
\label{lemma:varlinreg}
\[
\E[\nabla g(x)] = \mat{A}^\top \mat{A}x
~\text{ and } ~
\E\left[\normFull{\nabla g(x) - \nabla g(x^*)}_2^2\right] \leq M(f(x) - f(x^*))
~\text{ and } ~
T = \frac{\sqrt{\kappa}}{M}\sum_{i \in [n]} \norm{a_i}_2^2\sqrt{s_i}
\]
where T is the average time taken in calculating $\nabla g(x)$ where $M = \sum_i\norm{a_i}_2^2\left(1 + \sqrt{\frac{s_i}{\kappa}}\right)$.
\end{restatable}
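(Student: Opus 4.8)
The plan is to establish the three assertions separately, treating the variance bound as the crux. Unbiasedness is immediate: since $\nabla g(x) = (\widehat{\mat{A}^\top \mat{A}x})_k$ is exactly the estimator analyzed in Lemma~\ref{lemma:h4variance}, the first conclusion there gives $\E[\nabla g(x)] = \mat{A}^\top\mat{A}x$ with no extra work (the deterministic piece $-\mat{A}^\top b$ of $\nabla f$ is computed exactly and contributes no randomness, which is why the stated expectation is $\mat{A}^\top\mat{A}x$ rather than $\nabla f(x)$).

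For the variance bound I would first exploit a structural observation about the sampler. Under the standard SVRG coupling (the same sampled row $i$ and the same coordinate samples are used to evaluate the estimator at both points), $\nabla g(x)-\nabla g(x^*)$ is distributed exactly as $(\widehat{\mat{A}^\top\mat{A}(x-x^*)})_k$. This is because in $\Samplerankonemat$ the outer factor $(\widehat{a_i})_{c_i}=\Samplevec(a_i,c_i)$ does not depend on the argument, while $(\widehat{a_i^\top x})_{c_i}=\Sampledotproduct(a_i,x,c_i)$ is linear in $x$, and the row-selection probabilities $p_i$ in $\Samplemat$ depend only on $\mat{A}$. Hence, with shared randomness, $\nabla g(x)-\nabla g(x^*) = \tfrac{1}{p_i}(\widehat{a_i})_{c_i}(\widehat{a_i^\top(x-x^*)})_{c_i}$, which is precisely the output of $\Samplemat(\mat{A},x-x^*,k)$. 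Invoking Lemma~\ref{lemma:h4variance} on the vector $x-x^*$ then yields
$$\E\left[\normFull{\nabla g(x)-\nabla g(x^*)}_2^2\right] \le M\left(\norm{\mat{A}(x-x^*)}_2^2 + \tfrac{1}{k^2}\norm{\mat{A}}_F^2\,\norm{x-x^*}_2^2\right).$$

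Converting this right-hand side into $f(x)-f(x^*)$ is the step I expect to be the main obstacle, and it is where the choice $k=\sqrt{\kappa}$ is used essentially. Since $f$ is quadratic and minimized at $x^*$ with $\nabla f(x^*)=0$, an exact second-order expansion gives $f(x)-f(x^*)=\tfrac12\norm{\mat{A}(x-x^*)}_2^2$, so the first term equals $2(f(x)-f(x^*))$. For the second term I would combine strong convexity, $\mu\norm{x-x^*}_2^2\le\norm{\mat{A}(x-x^*)}_2^2$, with the identity $\tfrac{1}{k^2}\norm{\mat{A}}_F^2=\mu$ (which holds precisely because $k^2=\kappa=\norm{\mat{A}}_F^2/\mu$); this bounds the second term by $\norm{\mat{A}(x-x^*)}_2^2=2(f(x)-f(x^*))$ as well. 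Summing gives the claimed inequality up to a fixed constant factor, which is harmless for the SVRG guarantee of Theorem~\ref{thrm:SVRG}. Conceptually this is the heart of the argument: setting $k=\sqrt\kappa$ is exactly what makes the spurious $\ell_2$-error term $\tfrac{1}{k^2}\norm{\mat{A}}_F^2\norm{x-x^*}_2^2$ — the $\beta\norm{x}_2^2$ contribution flagged in the overview — absorbable into the genuine progress term via strong convexity.

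Finally, for the running time $T$, I would bound the cost of one evaluation of $\nabla g$ once row $i$ is drawn by $O(c_i)$: both $\Samplevec$ and $\Sampledotproduct$ take $c_i$ samples, and the exact inner product $\upperthres{a_i}{c_i}^\top x$ costs $O(c_i)$ provided the top coordinates, the $\ell_1$ norms, and the tail norms $\norm{\lowerthres{a_i}{c_i}}_2^2$ are precomputed once (foldable into the $\nnz(\mat{A})$ term). Averaging over the row distribution gives $T=\sum_i p_i\,O(c_i)=\tfrac1M\sum_i\norm{a_i}_2^2(c_i+s_i)$ with $c_i=\sqrt{s_i}\,k=\sqrt{s_i\kappa}$. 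Because $\kappa=\mathrm{tr}(\mat{A}^\top\mat{A})/\mu\ge d\ge s_i$, one has $s_i\le\sqrt{s_i\kappa}=c_i$, so $c_i+s_i\le 2c_i$ and $T=O\!\big(\tfrac{\sqrt\kappa}{M}\sum_i\norm{a_i}_2^2\sqrt{s_i}\big)$, as claimed; the hypothesis $\kappa\le d^2$ (equivalently $k\le d$) keeps the per-row sample counts in a meaningful range, with $c_i$ capped at $\nnz(a_i)$ when necessary, which only decreases both the cost and the variance.
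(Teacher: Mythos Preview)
Your proposal is correct and follows essentially the same route as the paper: unbiasedness from Lemma~\ref{lemma:h4variance}, the shared-randomness/linearity observation to reduce the variance to $\E\norm{(\widehat{\mat{A}^\top\mat{A}(x-x^*)})_k}_2^2$, the choice $k=\sqrt{\kappa}$ together with strong convexity to absorb the $\ell_2$ term, and the computation $T=\sum_i p_i c_i$ with $\kappa\ge d\ge s_i$ to simplify. The only (harmless) discrepancy you already flag is the constant in front of $f(x)-f(x^*)$, which the paper's own derivation also incurs.
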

Using the variance bound obtained in Lemma \ref{lemma:varlinreg} and the framework of SVRG stated
in Theorem \ref{thrm:SVRG} for solving approximate linear systems, we show how we can obtain an
algorithm for solving approximate regression in time which is faster in certain regimes when
the corresponding matrix is numerically sparse. The proof of this appears in Appendix \ref{sec:app_proofs}.
\begin{restatable}[Numerically Sparse Regression Runtime]{theorem}{runtimeregress}
\label{thm:runtime_regress}
For solving $\epsilon$-approximate regression (Definition \ref{problem:regression}), if $\kappa \leq d^2$, SVRG framework from Theorem \ref{thrm:SVRG} and
the variance bound from Lemma \ref{lemma:varlinreg} gives an algorithm with running time $O\left(\left(\nnz(\mat{A}) + \sqrt{\kappa}\sum_{i \in [n]} \frac{\norm{a_i}^2_2}{\mu}\sqrt{s_i}\right)\log\left(\frac{1}{\epsilon}\right)\right)$.
\end{restatable}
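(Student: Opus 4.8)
The plan is to treat the regression objective $f(x) = \frac{1}{2}\norm{\mat{A}x - b}_2^2$ as a strongly convex finite-sum problem and feed the stochastic gradient estimator $\nabla g$ from Lemma~\ref{lemma:varlinreg} directly into the generic SVRG convergence guarantee of Theorem~\ref{thrm:SVRG}. All of the substantive analysis—unbiasedness, the variance-in-terms-of-suboptimality bound, and the average per-sample cost—has already been done in Lemma~\ref{lemma:varlinreg}, so the theorem reduces to correctly instantiating the SVRG iteration count and time accounting and verifying that the product of the two collapses to the stated bound.

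First I would check that $f$ matches the SVRG template. Since $\nabla^2 f = \mat{A}^\top \mat{A}$ has smallest eigenvalue $\mu = \lambda_d(\mat{A}^\top\mat{A}) > 0$, the objective is $\mu$-strongly convex, and by Lemma~\ref{lemma:varlinreg} the estimator is unbiased, $\E[\nabla g(x)] = \mat{A}^\top\mat{A}x = \nabla f(x)$. I would then read off the variance parameter $\bar L = M$ from the bound $\E\norm{\nabla g(x) - \nabla g(x^*)}_2^2 \leq M(f(x) - f(x^*))$, with $M = \sum_i \norm{a_i}_2^2(1 + \sqrt{s_i/\kappa})$, and the per-step cost $T = \frac{\sqrt\kappa}{M}\sum_i \norm{a_i}_2^2\sqrt{s_i}$. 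The hypothesis $\kappa \leq d^2$ is inherited verbatim from Lemma~\ref{lemma:varlinreg} (it is what makes the sampling parameter $k=\sqrt\kappa$ admissible), so it carries through unchanged.

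Next I would plug $\bar L = M$ and strong-convexity parameter $\mu$ into Theorem~\ref{thrm:SVRG}. The standard SVRG guarantee yields a constant-factor reduction of the function suboptimality per epoch using $m = O(\bar L/\mu) = O(M/\mu)$ inner stochastic steps, so $O(\log(1/\epsilon))$ epochs suffice for $\epsilon$-accuracy (and, via strong convexity together with smoothness, for the $\norm{\cdot}_{\mat{A}^\top\mat{A}}$-error of Definition~\ref{problem:regression}). Each epoch costs one full-gradient evaluation, $O(\nnz(\mat{A}))$, plus $m$ stochastic steps of average cost $T$. The key simplification, and the only place that requires care, is the product
\[
m \cdot T = O\!\left(\frac{M}{\mu}\right) \cdot \frac{\sqrt\kappa}{M}\sum_i \norm{a_i}_2^2\sqrt{s_i} = O\!\left(\frac{\sqrt\kappa}{\mu}\sum_i \norm{a_i}_2^2\sqrt{s_i}\right),
\]
where the variance parameter $M$ cancels exactly. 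This cancellation is precisely why $k=\sqrt\kappa$ is the correct choice of sampling budget: it balances the number of iterations against the per-iteration cost so that $M$ disappears from the final complexity.

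Summing the per-epoch cost $O(\nnz(\mat{A}) + mT)$ over the $O(\log(1/\epsilon))$ epochs then gives the claimed running time $O\big((\nnz(\mat{A}) + \sqrt\kappa\sum_i (\norm{a_i}_2^2/\mu)\sqrt{s_i})\log(1/\epsilon)\big)$. I do not expect a genuine obstacle here, since Lemma~\ref{lemma:varlinreg} carries the difficulty; the only subtleties are bookkeeping ones—matching the exact normalization of $\bar L$ that Theorem~\ref{thrm:SVRG} expects (a constant factor in $\bar L$ only changes constants in $m$) and confirming that the geometric per-epoch contraction of function error translates into the $\norm{\cdot}_{\mat{A}^\top\mat{A}}$ error of Definition~\ref{problem:regression}. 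Neither affects the stated asymptotics.
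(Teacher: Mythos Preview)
Your proposal is correct and follows essentially the same argument as the paper: instantiate Theorem~\ref{thrm:SVRG} with strong-convexity parameter $\mu$ and variance parameter $\sigma^2 = M$ from Lemma~\ref{lemma:varlinreg}, take $m = O(M/\mu)$ inner steps per epoch, observe that $m\cdot T$ cancels the $M$ to give $O(\sqrt{\kappa}\sum_i \norm{a_i}_2^2\sqrt{s_i}/\mu)$, add the $\nnz(\mat{A})$ full-gradient cost, and repeat $O(\log(1/\epsilon))$ times. The only extra remark in the paper's proof that you do not mention is that the expected-time bound is converted to a constant-probability guarantee via Markov's inequality, but this does not change the asymptotics.
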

Combined with the additional acceleration framework mentione in Theorem \ref{thrm:acceleration}, we can get an accelerated
algorithm for solving regression. The proof of this appears in Appendix \ref{proof:acc_runtime_regress}.
\begin{restatable}[Numerically Sparse Accelerated Regression Runtime]{theorem}{accruntimeregress}
\label{thm:acc_runtime_regress}
  For solving $\epsilon$-approximate regression (Definition \ref{problem:regression}) if $\kappa \leq d^2$, SVRG framework from Theorem \ref{thrm:SVRG},
   acceleration framework from Theorem \ref{thrm:acceleration} and
  the variance bound from Lemma \ref{lemma:varlinreg} gives an algorithm with running time
$$
  O\left(\left(\nnz(\mat{A}) + \nnz(\mat{A})^{\frac{2}{3}}{\kappa}^{\frac{1}{6}}{\left(\sum_{i \in [n]} \frac{\norm{a_i}^2_2}{\mu}\sqrt{s_i}\right)}^{\frac{1}{3}}\log({\kappa})\right)\log\left(\frac{1}{\epsilon}\right)\right) ~.
  $$
\end{restatable}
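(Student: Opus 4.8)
The plan is to feed the unaccelerated solver of Theorem~\ref{thm:runtime_regress} into the approximate proximal point / catalyst reduction of Theorem~\ref{thrm:acceleration}. Recall that this reduction minimizes the $\mu$-strongly convex objective $f(x)=\tfrac12\norm{\mat{A}x-b}_2^2$ by fixing a regularization parameter $\lambda_{\mathrm{reg}}$ and approximately solving a sequence of $\otilde(\sqrt{1+\lambda_{\mathrm{reg}}/\mu})$ subproblems $\min_x f(x)+\tfrac{\lambda_{\mathrm{reg}}}{2}\norm{x-y_t}_2^2$, each to constant multiplicative accuracy; if one such subproblem costs time $\mathcal{T}$, the reduction returns an $\epsilon$-approximate minimizer of $f$ in time $\otilde(\sqrt{1+\lambda_{\mathrm{reg}}/\mu}\cdot\mathcal{T})$. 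So I would (i) bound $\mathcal{T}$ through Theorem~\ref{thm:runtime_regress}, and (ii) optimize over $\lambda_{\mathrm{reg}}$.

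The crucial structural fact is that each subproblem is itself a regression instance. Indeed $f(x)+\tfrac{\lambda_{\mathrm{reg}}}{2}\norm{x-y_t}_2^2=\tfrac12\norm{\tilde{\mat{A}}x-\tilde b_t}_2^2+\mathrm{const}$ with $\tilde{\mat{A}}=\big(\begin{smallmatrix}\mat{A}\\ \sqrt{\lambda_{\mathrm{reg}}}\,\mat{I}\end{smallmatrix}\big)$ and a suitable right-hand side $\tilde b_t$. I would then read off the parameters of $\tilde{\mat{A}}$ that enter Theorem~\ref{thm:runtime_regress}: its smallest eigenvalue is $\mu+\lambda_{\mathrm{reg}}$, its squared Frobenius norm is $\norm{\mat{A}}_F^2+\lambda_{\mathrm{reg}}d$, and since each of the $d$ appended rows $\sqrt{\lambda_{\mathrm{reg}}}\,\hat e_j$ has numerical sparsity exactly $1$, the weighted sparsity sum becomes $\sum_{i\in[n]}\sqrt{s_i}\norm{a_i}_2^2+\lambda_{\mathrm{reg}}d$. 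Writing $\tau\defeq\lambda_{\mathrm{reg}}/\mu\ge 1$, $N\defeq\nnz(\mat{A})$, and $Q\defeq\sum_{i\in[n]}\tfrac{\norm{a_i}_2^2}{\mu}\sqrt{s_i}$, the augmented condition number is $\tilde\kappa=\Theta(\kappa/\tau+d)$ and the weighted sum divided by $\mu+\lambda_{\mathrm{reg}}$ is $\Theta(Q/\tau+d)$, so Theorem~\ref{thm:runtime_regress} gives $\mathcal{T}=\otilde(N+d+\sqrt{\kappa/\tau+d}\,(Q/\tau+d))$, which in the regime where the appended-row terms are lower order becomes $\mathcal{T}=\otilde(N+\sqrt{\kappa}\,Q/\tau^{3/2})$.

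Multiplying by the outer factor $\sqrt{\tau}$ gives total time $\otilde(\sqrt{\tau}\,N+\sqrt{\kappa}\,Q/\tau)$. Balancing the two terms, $\sqrt{\tau}\,N=\sqrt{\kappa}\,Q/\tau$, yields the optimal choice $\tau=(\sqrt{\kappa}\,Q/N)^{2/3}$, and back-substitution gives $\otilde(N+N^{2/3}\kappa^{1/6}Q^{1/3})$, which is exactly the claimed bound after expanding $Q$. As a sanity check, uniform rows ($N=nd$, $Q=\sqrt{s}\,\kappa$) reproduce the stated $\otilde(nd+(nd)^{2/3}\kappa^{1/2}s^{1/6})$.

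The step I expect to be the main obstacle is verifying that Theorem~\ref{thm:runtime_regress}, and with it the variance bound of Lemma~\ref{lemma:varlinreg}, legitimately applies to the regularized subproblem. That bound is centered at the minimizer of the unregularized $f$, whereas the subproblem must be analyzed at its own minimizer $x_F^*$; I would re-run the argument of Lemma~\ref{lemma:varlinreg} for $\tilde{\mat{A}}$ to obtain $\E\norm{\nabla g(x)-\nabla g(x_F^*)}_2^2\le M(F(x)-F(x_F^*))$ for the augmented objective $F$, and confirm that the sampling parameter $\tilde k=\sqrt{\tilde\kappa}$ is admissible, i.e. $\tilde\kappa\le d^2$, which follows from $\tilde\kappa\le\max(\kappa,d)\le d^2$ under the standing assumption $\kappa\le d^2$. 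The remaining bookkeeping is routine: checking that the extra $O(d)$ rows and nonzeros of $\tilde{\mat{A}}$ and the per-iteration proximal shift are absorbed into $\nnz(\mat{A})$ and lower-order terms in the target regime, and tracking the two logarithms — one from the accuracy to which each of the $\otilde(\sqrt{\tau})$ inner problems is solved and one from the outer $\epsilon$-convergence — that produce the stated $\log(\kappa)\log(1/\epsilon)$ overhead.
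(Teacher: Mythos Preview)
Your proposal is correct and follows essentially the same route as the paper: both reduce the regularized subproblem to a regression instance on the augmented matrix $\tilde{\mat{A}}=\bigl(\begin{smallmatrix}\mat{A}\\ \sqrt{\lambda_{\mathrm{reg}}}\,\mat{I}\end{smallmatrix}\bigr)$, invoke Theorem~\ref{thm:runtime_regress} on $\tilde{\mat{A}}$, wrap the result in Theorem~\ref{thrm:acceleration}, and balance the two terms to choose $\lambda_{\mathrm{reg}}$. Your bookkeeping (the $\tilde\kappa\le\kappa\le d^2$ check, the observation that the $d$ appended rows contribute lower-order terms, and the tracking of the two logarithmic factors) matches the paper's, and your normalization of the appended rows as $\sqrt{\lambda_{\mathrm{reg}}}\,\hat e_j$ is in fact cleaner than what appears there.
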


\section*{Acknowledgments}
We would like to thank the anonymous reviewers who helped improve the readability and presentation of this draft by providing
many helpful comments.

\bibliographystyle{plain}
\bibliography{refs}

\appendix

\section{Preliminaries}
\subsection{SVRG}

We state the SVRG framework which was originally given by \cite{johnson2013accelerating}. Our algorithms for solving regression and top eigenvector computation both involve using this framework to solve linear systems. SVRG is used to get linear convergence for stochastic gradient descent by taking gradient updates which equals the exact gradient in expectation but which have reduced variance which goes down to 0 as we reach near the optimum. This specific statement of the results is a slight modification of the statement presented in \cite{garber2016faster}. The proof is analogous to the one from \cite{garber2016faster}. We are stating it here for completeness.

\begin{theorem}[SVRG for Sums of Non-Convex functions] \label{thrm:SVRG} Let D be a distribution over functions, $g_1, g_2, \ldots \in \R^d \rightarrow \R^d$. Let $\nabla f(x) - \nabla f(y) = \E_{g_{i_k} \sim D} \nabla g_{i_k}(x) - \nabla g_{i_k}(y) \quad \forall x,y \in \R^d$ and let $x^* = \argmin_{x \in \R^d} f(x)$. Suppose that starting from some initial point $x_0 \in \R^d$ in each iteration k, we let
$$x_{k+1} :=  x_k - \eta (\nabla g_{i_k}(x_k) - \nabla g_{i_k}(x_0)) + \eta \nabla f(x_0)$$ where $g_{i_k} \sim D$ independently at random for some $\eta$.

If $f$ is $\mu$-strongly convex and if for all $x \in \R^d$, we have
\begin{align}
\E_{g_{i_k} \sim D} \norm{\nabla g_{i_k}(x) - \nabla g_{i_k}(x^*)}_2^2 \leq 2\sigma^2(f(x) - f(x^*)) \label{eqn:svrgclaim}
\end{align}
where $\sigma^2$ we call the \emph{variance parameter}, then for all $m \geq 1$, we have
\[
\E\left[\frac{1}{m} \sum_{k \in [m]} f(x_k) - f(x^*) \right] \leq \frac{1}{1 - 2\eta\sigma^2}\left(\frac{1}{m\eta\mu} + 2\eta\sigma^2\right)(f(x_0) - f(x^*))
\]
Consequently, if we pick $\eta$ to be a sufficiently small multiple of $\frac{1}{\sigma^2}$ then when $m = O(\frac{\sigma^2}{\mu})$, we can decrease the error by a constant multiplicative factor in expectation.
\end{theorem}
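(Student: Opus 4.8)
The plan is to run the standard SVRG potential argument, tracking the squared distance $\norm{x_k - x^*}_2^2$ to the minimizer across iterations and extracting a bound on the average function error. First I would name the search direction $v_k \defeq \nabla g_{i_k}(x_k) - \nabla g_{i_k}(x_0) + \nabla f(x_0)$, so that $x_{k+1} = x_k - \eta v_k$, and observe that by the hypothesis $\nabla f(x) - \nabla f(y) = \E_{g_{i_k} \sim D}[\nabla g_{i_k}(x) - \nabla g_{i_k}(y)]$ (applied with $y = x_0$) the conditional expectation satisfies $\E_{i_k}[v_k] = \nabla f(x_k)$; that is, $v_k$ is an unbiased estimate of the true gradient. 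Expanding the square then gives the one-step identity $\E_{i_k}\norm{x_{k+1} - x^*}_2^2 = \norm{x_k - x^*}_2^2 - 2\eta\, \nabla f(x_k)^\top (x_k - x^*) + \eta^2\, \E_{i_k}\norm{v_k}_2^2$.

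The heart of the argument is bounding the variance term $\E_{i_k}\norm{v_k}_2^2$. Here I would write $v_k = [\nabla g_{i_k}(x_k) - \nabla g_{i_k}(x^*)] - [\nabla g_{i_k}(x_0) - \nabla g_{i_k}(x^*) - \nabla f(x_0)]$, apply $\norm{a+b}_2^2 \leq 2\norm{a}_2^2 + 2\norm{b}_2^2$, and note that the second bracket is mean-zero (its expectation is $\nabla f(x_0) - \nabla f(x^*) - \nabla f(x_0) = 0$, using $\nabla f(x^*) = 0$), so its second moment is dominated by $\E_{i_k}\norm{\nabla g_{i_k}(x_0) - \nabla g_{i_k}(x^*)}_2^2$. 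Both resulting terms are then controlled by the variance hypothesis \eqref{eqn:svrgclaim}, giving $\E_{i_k}\norm{v_k}_2^2 \leq 4\sigma^2\bigl[(f(x_k) - f(x^*)) + (f(x_0) - f(x^*))\bigr]$. This is the step I expect to be the main obstacle: getting the centering right so the cross term vanishes and the constant lines up with the factor $1 - 2\eta\sigma^2$ appearing in the final statement.

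With the variance in hand, convexity supplies $\nabla f(x_k)^\top(x_k - x^*) \geq f(x_k) - f(x^*)$, so the one-step bound becomes $\E_{i_k}\norm{x_{k+1} - x^*}_2^2 \leq \norm{x_k - x^*}_2^2 - 2\eta(1 - 2\eta\sigma^2)(f(x_k) - f(x^*)) + 4\eta^2\sigma^2(f(x_0) - f(x^*))$. Taking the full expectation and summing this telescoping inequality over the $m$ consecutive iterates, then discarding the nonnegative terminal term $\E\norm{x_m - x^*}_2^2$, isolates $2\eta(1 - 2\eta\sigma^2)\sum_k \E[f(x_k) - f(x^*)]$ against $\norm{x_0 - x^*}_2^2 + 4m\eta^2\sigma^2(f(x_0) - f(x^*))$.

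Finally, I would invoke $\mu$-strong convexity to replace $\norm{x_0 - x^*}_2^2$ by $\tfrac{2}{\mu}(f(x_0) - f(x^*))$, then divide through by $2\eta(1 - 2\eta\sigma^2)m$ to recover the claimed bound $\frac{1}{1 - 2\eta\sigma^2}\bigl(\frac{1}{m\eta\mu} + 2\eta\sigma^2\bigr)(f(x_0) - f(x^*))$. The closing remark, that choosing $\eta$ a small multiple of $1/\sigma^2$ and $m = O(\sigma^2/\mu)$ yields a constant multiplicative decrease, then follows by inspection of this expression. Apart from the variance computation every step is routine, and the only bookkeeping care needed is aligning the summation range with the statement's index set $[m]$.
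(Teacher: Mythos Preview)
Your proposal is correct and follows essentially the same route as the paper's proof: expand $\norm{x_{k+1}-x^*}_2^2$, center the variance term at $x^*$ and split via $\norm{a+b}^2\le 2\norm{a}^2+2\norm{b}^2$, drop the mean in the $x_0$-bracket, apply \eqref{eqn:svrgclaim} twice, use convexity on the cross term, telescope, and finish with $\mu$-strong convexity. The only difference is cosmetic (you name $v_k$ explicitly), and your caveat about aligning the summation index with $[m]$ is apt.
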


\begin{proof}
Using the fact that we have, $\nabla f(x) - \nabla f(y) = \E_{g_{i_k} \sim D} \nabla g_{i_k}(x) - \nabla g_{i_k}(y) \quad \forall x,y \in \R^d$, we have that:
\begin{align}
\E_{g_{i_k} \sim D}\norm{x_{k+1} - x^*}_2^2 &=  \E_{g_{i_k} \sim D}\norm{x_{k} - \eta (\nabla g_{i_k}(x_k) - \nabla g_{i_k}(x_0) + \nabla f(x_0)) - x^*}_2^2 \nonumber\\
&= \E_{g_{i_k} \sim D}\norm{x_{k} - x^*}_2^2 - 2\eta \E_{g_{i_k} \sim D}(\nabla g_{i_k}(x_k) - \nabla g_{i_k}(x_0) + \nabla f(x_0))^\top (x_k - x^*) + \nonumber\\
& \eta^2 \E_{g_{i_k} \sim D}\norm{\nabla g_{i_k}(x_k) - \nabla g_{i_k}(x_0) + \nabla f(x_0)}_2^2\nonumber\\
&= \E_{g_{i_k} \sim D}\norm{x_{k} - x^*}_2^2 - 2\eta \nabla f(x_k)^\top (x_k - x^*)  + \nonumber\\
& \eta^2 \E_{g_{i_k} \sim D}\norm{\nabla g_{i_k}(x_k) - \nabla g_{i_k}(x_0) + \nabla f(x_0)}_2^2 \label{eqn:svrg:inter1}
\end{align}

Now, using $\norm{x + y}_2^2 \leq 2\norm{x}_2^2 + 2\norm{y}_2^2$, we get:
\begin{align}
\E_{g_{i_k} \sim D}\norm{\nabla g_{i_k}(x_k) - \nabla g_{i_k}(x_0) + \nabla f(x_0)}_2^2 &\leq 2\E_{g_{i_k} \sim D}\norm{\nabla g_{i_k}(x_k) - \nabla g_{i_k}(x^*)}_2^2 + \nonumber\\
& 2\E_{g_{i_k} \sim D}\norm{\nabla g_{i_k}(x_0) - \nabla g_{i_k}(x^*) - \nabla f(x_0)}_2^2 \label{eqn:svrg:inter}
\end{align}

Now, we know that $\nabla f(x^*) = 0$ and using $\E\norm{x - \E x}^2 \leq \E\norm{x}_2^2$
\begin{align}
\E_{g_{i_k} \sim D}\norm{\nabla g_{i_k}(x_0) - \nabla g_{i_k}(x^*) - \nabla f(x_0)}_2^2 &= \E_{g_{i_k} \sim D}\norm{\nabla g_{i_k}(x_0) - \nabla g_{i_k}(x^*) - (\nabla f(x_0) - \nabla f(x^*))}_2^2 \nonumber\\
&\leq \E_{g_{i_k} \sim D}\norm{\nabla g_{i_k}(x_0) - \nabla g_{i_k}(x^*)}_2^2 \label{eqn:var1}
\end{align}

Now, using  \eqref{eqn:svrgclaim} and \eqref{eqn:var1} in \eqref{eqn:svrg:inter}, we get:
\begin{align}
\E_{g_{i_k} \sim D}\norm{\nabla g_{i_k}(x_k) - \nabla g_{i_k}(x_0) + \nabla f(x_0)}_2^2
&\leq 4\sigma^2(f(x_k) - f(x^*)) + 4\sigma^2(f(x_0) - f(x^*)) \nonumber\\
&\leq 4\sigma^2(f(x_k) - f(x^*) + f(x_0) - f(x^*)) \label{eqn:svrg:inter2}
\end{align}

Using the convexity of $f$, we get $f(x^*) - f(x_k) \geq \nabla f(x_k)^\top (x^* - x_k)$, using this and \eqref{eqn:svrg:inter2} in \eqref{eqn:svrg:inter1}, we get
\begin{align}
\E_{g_k \sim D}\norm{x_{k+1} - x^*}_2^2 &\leq \norm{x_{k} - x^*}_2^2 - 2\eta \nabla f(x_k)^\top (x_k - x^*)
+ 4\eta^2\sigma^2(f(x_k) - f(x^*) + f(x_0) - f(x^*)) \nonumber\\
&\leq \norm{x_{k} - x^*}_2^2 - 2\eta (f(x_k) - f(x^*)) + 4\eta^2\sigma^2(f(x_k) - f(x^*) + f(x_0) - f(x^*)) \nonumber\\
&\leq \norm{x_{k} - x^*}_2^2 - 2\eta(1 - 2\eta\sigma^2) (f(x_k) - f(x^*))  + 4\eta^2\sigma^2(f(x_0) - f(x^*)) \nonumber
\end{align}

Rearranging, we get that
\begin{align}
2\eta(1 - 2\eta\sigma^2) (f(x_k) - f(x^*)) &\leq \norm{x_{k} - x^*}_2^2 - \E_{g_k \sim D}\norm{x_{k+1} - x^*}_2^2  + 4\eta^2\sigma^2(f(x_0) - f(x^*)) \nonumber
\end{align}

Summing over all iterations and taking expectations, we get
\begin{align}
2\eta(1 - 2\eta\sigma^2) \E[\sum_{k \in [m]}(f(x_k) - f(x^*))] &\leq \norm{x_{0} - x^*}_2^2 + 4m\eta^2\sigma^2(f(x_0) - f(x^*)) \nonumber
\end{align}

Now, using strong convexity, we get that $\norm{x_0 - x^*}_2^2 \leq \frac{2}{\mu}(f(x_0) - f(x^*))$ and using this we get:
\begin{align}
2\eta(1 - 2\eta\sigma^2) \E\left[\sum_{k \in [m]}(f(x_k) - f(x^*))\right] &\leq \frac{2}{\mu}(f(x_0) - f(x^*)) + 4m\eta^2\sigma^2(f(x_0) - f(x^*)) \nonumber\\
\E\left[\frac{1}{m}\sum_{k \in [m]}(f(x_k) - f(x^*))\right] &\leq \frac{1}{1 - 2\eta\sigma^2}\left(\frac{1}{m\eta\mu} + 2\eta\sigma^2\right)(f(x_0) - f(x^*)) \nonumber
\end{align}
\end{proof}

\subsection{Acceleration}

Below is a Theorem from \cite{frostig2015regularizing} which shows how can we accelerate an ERM problem where the objective is strongly convex and each of the individual components is smooth in a black box fashion by solving many regularized version of the problems. We will use this theorem to give accelerated runtimes for our problems of regression and top eigenvector computation.

\begin{theorem} [Accelerated Approximate Proximal Point, Theorem 1.1 of \cite{frostig2015regularizing}]
\label{thrm:acceleration}
Let $f: \R^n \rightarrow \R$ be a $\mu$ strongly convex function and suppose that for all $x_0 \in \R^n$, $c > 0$, $\lambda > 0$, we can compute a possibly random $x_c \in \R^n$ such that
$$\E f(x_c) - \min_{x \in \R^n}
\left\{f(x) + \frac{\lambda}{2}\norm{x - x_0}_2^2 \right\}
\leq \frac{1}{c}\left[f(x_0) - \min_{x \in \R^n} \{f(x) + \frac{\lambda}{2}\norm{x - x_0}_2^2\}\right]$$ in time $T_c$. Then, given any $x_0, c>0, \lambda \geq 2\mu$, we can compute $x_1$ such that
$$\E f(x_1) - \min_x f(x) \leq \frac{1}{c}[f(x_0) - \min_x f(x)]$$ in time $O\left(T_{4({\frac{2\lambda + \mu}{\mu}})^{\frac{3}{2}}}\sqrt{\lceil\frac{\lambda}{\mu}\rceil}\log(c)\right)$.
\end{theorem}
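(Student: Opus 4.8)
The plan is to recognize this as the standard accelerated (approximate) proximal point guarantee and to prove it by running an accelerated first-order method on the \emph{Moreau envelope} of $f$, with every prox evaluation replaced by one call to the assumed approximate-minimization oracle. Define $F_\lambda(x) \defeq \min_{y \in \R^n}\{ f(y) + \tfrac{\lambda}{2}\norm{y - x}_2^2\}$ and let $p_\lambda(x)$ be its unique minimizer. Since $f$ is $\mu$-strongly convex, a short calculation (standard for Moreau envelopes) shows that $F_\lambda$ is differentiable with $\nabla F_\lambda(x) = \lambda(x - p_\lambda(x))$, is $\lambda$-smooth, and is $\tfrac{\lambda\mu}{\lambda+\mu}$-strongly convex, so that its condition number is $\tfrac{\lambda+\mu}{\mu} = O(\lambda/\mu)$. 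Crucially, $F_\lambda$ and $f$ share the same minimizer and minimum value, so minimizing $F_\lambda$ to relative error $1/c$ suffices.

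First I would invoke an accelerated gradient method (equivalently Nesterov's estimate-sequence scheme, or G\"uler's accelerated proximal point) on $F_\lambda$. With exact gradients this reaches relative error $1/c$ in $O(\sqrt{(\lambda+\mu)/\mu}\,\log c) = O(\sqrt{\lceil \lambda/\mu\rceil}\,\log c)$ outer iterations, using $\lambda \geq 2\mu$. Each iteration needs one gradient of $F_\lambda$, i.e. one prox evaluation $p_\lambda(\cdot)$, which is exactly the inner problem $\min_y\{f(y) + \tfrac{\lambda}{2}\norm{y-x}_2^2\}$; this problem is $(\lambda+\mu)$-strongly convex and is solvable to any relative accuracy $1/c'$ by the assumed oracle in time $T_{c'}$. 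Setting $c' = 4((2\lambda+\mu)/\mu)^{3/2} = \Theta((\lambda/\mu)^{3/2})$ yields an approximate prox point, hence an inexact gradient of $F_\lambda$, whose error I would bound via strong convexity of the inner objective, translating the $1/c'$-relative function-value error into a controlled $\ell_2$ error on $p_\lambda(x)$.

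The hard part will be the inexactness analysis: accelerated methods amplify per-step gradient errors, so I must show that solving each prox step only approximately does not destroy the $\sqrt{\lambda/\mu}$ acceleration. I would run a robust estimate-sequence (Lyapunov/potential-function) argument in which each approximate prox call contributes a bounded additive error, then sum these over the $O(\sqrt{\lceil\lambda/\mu\rceil}\log c)$ iterations. The choice $c' = \Theta((\lambda/\mu)^{3/2})$ is precisely what makes the accumulated error geometrically dominated and keeps the final relative error at $1/c$; balancing this inner accuracy against the outer iteration count is the technical crux. Multiplying the $O(\sqrt{\lceil \lambda/\mu\rceil}\,\log c)$ outer iterations by the per-call cost $T_{c'}$ then gives the claimed $O\!\left(T_{4((2\lambda+\mu)/\mu)^{3/2}}\sqrt{\lceil \lambda/\mu\rceil}\log c\right)$ running time.
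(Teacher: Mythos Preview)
The paper does not prove this statement at all: it is quoted verbatim as Theorem~1.1 of \cite{frostig2015regularizing} and used as a black box, so there is no ``paper's own proof'' to compare your proposal against. Your outline is a faithful sketch of the argument that actually appears in \cite{frostig2015regularizing} (and in the closely related Catalyst paper \cite{lin2015universal}): run an accelerated first-order scheme on the Moreau envelope $F_\lambda$, whose condition number is $(\lambda+\mu)/\mu$, and replace each exact prox step by a call to the assumed oracle; the inner accuracy $c' = 4((2\lambda+\mu)/\mu)^{3/2}$ is exactly what is needed so that the accumulated inexactness over $O(\sqrt{\lambda/\mu}\log c)$ outer iterations stays bounded by the target error.

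One small caution on your sketch: the hypothesis bounds $\E f(x_c) - \min_y\{f(y)+\tfrac{\lambda}{2}\norm{y-x_0}^2\}$, not the suboptimality of the full regularized objective $f(\cdot)+\tfrac{\lambda}{2}\norm{\cdot-x_0}^2$ at $x_c$. When you convert the oracle's function-value error into an $\ell_2$ error on the prox point via $(\lambda+\mu)$-strong convexity, make sure you control the missing $\tfrac{\lambda}{2}\norm{x_c-x_0}^2$ term (this is handled in \cite{frostig2015regularizing} but is easy to overlook). Apart from that, your plan matches the original proof, so for the purposes of \emph{this} paper you should simply cite the result rather than reprove it.
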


\section{Proofs}
\subsection{Sampling Techniques Proofs}
\label{sec:sample_proofs}
First, we provide the following Lemma~\ref{lemma:functiondiff} which will be used later in the proofs to relate the difference between function values at any point $x$ and the optimal point $x^*$ to the $\mat{A}^\top \mat{A}$ norm of difference between the two points. This is key to relating the error from sampling to function error. Note that this is standard and well known.
\begin{lemma}
\label{lemma:functiondiff}
Let $f(x) = \frac{1}{2}\norm{\mat{A}x - b}_2^2$ and $x^* = \argmin f(x)$, then $$2(f(x) - f(x^*)) = \norm{\mat{A}(x - x^*)}_2^2$$
\end{lemma}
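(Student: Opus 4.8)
The plan is to exploit the fact that $f$ is an exact quadratic, so its second-order Taylor expansion around the minimizer $x^*$ is exact and the only surviving terms are the constant and the Hessian term. Concretely, I would first compute $\nabla f(x) = \mat{A}^\top(\mat{A}x - b)$ and $\nabla^2 f(x) = \mat{A}^\top\mat{A}$, noting that the Hessian is constant. Since $x^* = \argmin f$, the first-order optimality condition gives the normal equations $\nabla f(x^*) = \mat{A}^\top(\mat{A}x^* - b) = 0$. Expanding the quadratic exactly about $x^*$ then yields
\[
f(x) = f(x^*) + \nabla f(x^*)^\top (x - x^*) + \tfrac{1}{2}(x - x^*)^\top \mat{A}^\top\mat{A}\,(x - x^*),
\]
and the middle term vanishes by optimality, so $2(f(x) - f(x^*)) = (x-x^*)^\top \mat{A}^\top\mat{A}\,(x - x^*) = \norm{\mat{A}(x - x^*)}_2^2$, which is exactly the claim.

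If I preferred a fully elementary route avoiding Taylor's theorem, I would instead expand $2f(x) = x^\top \mat{A}^\top\mat{A}x - 2(\mat{A}^\top b)^\top x + \norm{b}_2^2$ directly, subtract the analogous expression for $x^*$, and substitute $\mat{A}^\top b = \mat{A}^\top\mat{A}x^*$ (again the normal equations) to eliminate $b$. After collecting terms and using the symmetry of $\mat{A}^\top\mat{A}$ to combine the cross terms $(x^*)^\top\mat{A}^\top\mat{A}x = x^\top\mat{A}^\top\mat{A}x^*$, the expression collapses to $(x - x^*)^\top\mat{A}^\top\mat{A}(x - x^*)$.

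There is no genuine obstacle here: the single substantive input is the first-order optimality condition $\mat{A}^\top(\mat{A}x^* - b) = 0$, and the rest is routine algebra that relies only on $\mat{A}^\top\mat{A}$ being symmetric. The one point worth stating carefully is why the linear term drops out — it is precisely that $x^*$ is a stationary point — so I would make sure to invoke optimality explicitly rather than treating it as implicit. I expect the Taylor-expansion phrasing to be the shortest to write, since the exactness of the expansion for a quadratic needs no remainder-term estimate.
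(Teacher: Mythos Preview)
Your proposal is correct and essentially matches the paper's proof: both rely on the first-order optimality condition $\mat{A}^\top(\mat{A}x^* - b) = 0$ and then reduce the difference $2(f(x)-f(x^*))$ to $(x-x^*)^\top\mat{A}^\top\mat{A}(x-x^*)$ by elementary quadratic algebra. The paper happens to use the difference-of-squares factorization $\norm{\mat{A}x-b}_2^2 - \norm{\mat{A}x^*-b}_2^2 = (\mat{A}(x-x^*))^\top(\mat{A}x + \mat{A}x^* - 2b)$ before substituting the normal equations, but this is exactly your ``fully elementary route'' written slightly differently, and your Taylor-expansion phrasing is an equivalent repackaging.
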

\begin{proof}
\label{proof:functiondiff}
We know $\nabla f(x^*) = 0$ since $x^* = \argmin f(x)$,
thus, we get that $\mat{A}^\top (\mat{A}x^* - b) = 0$ or $\mat{A}^\top \mat{A}x^* = \mat{A}^\top b$.
Now,
\begin{equation*}
\begin{aligned}
2(f(x) - f(x^*)) &= \norm{\mat{A}x - b}_2^2 - \norm{\mat{A}x^* - b}_2^2\\
&= (\mat{A}x - b - \mat{A}x^* + b )^\top (\mat{A}x - 2b + \mat{A}x^*)\\
&= (x - x^*)^\top \mat{A}^\top (\mat{A}x - 2b + \mat{A}x^*)\\
&= (x - x^*)^\top (\mat{A}^\top \mat{A}x - 2\mat{A}^\top b + \mat{A}^\top \mat{A}x^*)\\
&= (x - x^*)^\top (\mat{A}^\top \mat{A}x - 2\mat{A}^\top \mat{A}x^* + \mat{A}^\top \mat{A}x^*)\\
&= (x - x^*)^\top \mat{A}^\top \mat{A}(x - x^*)\\
&= \norm{\mat{A}(x - x^*)}_2^2\\
\end{aligned}
\end{equation*}
\end{proof}

Now, we provide the proof for technical lemma
on numerical sparsity that we used throughout our analysis.
\sparsityratio*
\begin{proof}
   We can assume without loss of generality that $|a_i| \geq |a_j| $ whenever $i < j$ i.e.
  the indices are sorted in descending order of the absolute values.
\begin{align*}
\frac{\norm{\lowerthres{a}{c}}_2^2}{\norm{a}_2^2}
&= \frac{a_{c+1}^2 + a_{c+2}^2 + \cdots + a_{d}^2}{\norm{a}_2^2}
\leq \frac{|a_{c+1}|(|a_{c+1}| + |a_{c+2}| + \cdots + |a_{d}|)}{\norm{a}_2^2} \\
&\leq \frac{|a_{c+1}|\norm{a}_1}{\norm{a}_2^2}
\leq \frac{\norm{a}_1\norm{a}_1}{c\norm{a}_2^2}
\leq \frac{\norm{a}^2_1}{c\norm{a}_2^2}
\leq \frac{s(a)}{c}
\end{align*}
\end{proof}
Now, we analyze the sampling distribution we use to sample from our matrix for SVRG.
\honevariance*
\begin{proof}
\label{proof:h1variance}
Since, $(\hat{a})_c = \Samplevec(a, c)$, we can also write this as
$(\hat{a})_c = \frac{1}{c}\sum^c_{i = 1}X_i$ where $\{X_i\}$ are sampled i.i.d. such that $\Pr(X_i = \frac{a_{j}}{p_{j}}\hat{e}_j) = p_j = \frac{|a_{j}|}{\norm{a}_1}\quad \forall j \in [d]$.

Calculating first and second moments of random variable $X_i$, we get that
\begin{align}
\E[X_i] &= \sum_{j \in d} p_j\frac{a_{j}}{p_{j}}\hat{e}_j = \sum_{j \in d} \hat{e}_ja_{j}\nonumber \\
&= a \label{eqn:h21}
\end{align}
\begin{align}
\E\left[\norm{X_i}_2^2\right] &= \sum_{j \in [d]} p_j\left(\frac{a_{j}\hat{e}_j}{p_{j}}\right)^2
= \sum_{j \in [d]} \frac{a_{j}^2}{p_{j}}
= \norm{a}_1\sum_{j \in [d]} \frac{a_{j}^2}{|a_{j}|}
= \norm{a}_1\sum_{j \in [d]} |a_{j}|\nonumber\\
= \norm{a}^2_1 \label{eqn:h22}
\end{align}
Now, using the calculated moments in \eqref{eqn:h21} and \eqref{eqn:h22}, to calculate the first and second moments of $(\hat{a})_c$
\begin{equation*}
\begin{aligned}
\E[(\hat{a})_c] &= \E\left[\frac{1}{c}\sum_{i \in [c]}X_i\right]
&= \frac{1}{c}\sum_{i \in [c]}\E[X_i]
&=  \frac{1}{c}\sum_{i \in [c]}a
&= a\\
\end{aligned}
\end{equation*}
\begin{equation*}
\begin{aligned}
\E\left[\norm{(\hat{a})_c}_2^2\right] &= \E\left[\normFull{\frac{1}{c}\sum_{i \in [c]}X_i }_2^2\right]\\
&= \frac{1}{c^2}\E\left[\sum_{i \in [c]}\norm{X_i}_2^2 + \sum_{i, j \in [c], i \neq j}X_i^\top X_j\right]\\
\end{aligned}
\end{equation*}
Using the moments for random variable $X_i$ calculated in \eqref{eqn:h21} and \eqref{eqn:h22} and independence of $X_i$ and $X_j$ for $i \neq j$, we get that
\begin{equation*}
\begin{aligned}
\E\left[\norm{(\hat{a})_c}_2^2\right] &= \frac{1}{c^2}\left(\sum_{i \in [c]}\norm{a}^2_1 + \sum_{i,j \in [c], i \neq j}a^\top a\right)\\
&= \frac{1}{c^2}\left(c\norm{a}^2_1 + c(c-1)a^\top a\right)\\
&= \frac{1}{c}\left(\norm{a}^2_1 + (c-1)a^\top a\right)\\
\end{aligned}
\end{equation*}
Using $s(a) = \norm{a}_1^2 / \norm{a}_2^2$
\begin{equation*}
\begin{aligned}
\E[\norm{(\hat{a})_c}_2^2] = \norm{a}^2_2\frac{1}{c}\left(s(a) + (c-1)\right)
\leq \norm{a}^2_2\left(1 + \frac{s(a)}{c}\right)
\end{aligned}
\end{equation*}
\end{proof}
\htwovariance*
\begin{proof}
\label{proof:h2variance}
Since $(\widehat{a^\top x})_c = \Sampledotproduct(a, x, c)$, we can also write this as
$(\widehat{a^\top x})_c = \upperthres{a}{c}^\top x + \frac{1}{c}\sum^c_{i = 1}X_i(x)$ where $\{X_i\}$ are sampled i.i.d. such that for each $X_i$, $Pr(X_i = \frac{a_{k}x_k}{p_{k}}) = p_{k} = \frac{a_{k}^2}{\norm{\lowerthres{a}{c}}^2_2}\quad \forall k \in \botind{a}{c}$.

Calculating first and second moments of random variable $X_i$, we get that
\begin{align}
\E[X_i] &= \sum_{k \in \botind{a}{c}} p_k\frac{a_{k}x_k}{p_{k}} = \sum_{k \in \botind{a}{c}} a_{k}x_k\nonumber\\
&= \lowerthres{a}{c}^\top x \label{eqn:h11}
\end{align}
\begin{align}
\E\left[\norm{X_i}_2^2\right] &= \sum_{k \in \botind{a}{c}} p_k\left(\frac{a_{k}x_k}{p_{k}}\right)^2
= \sum_{k \in \botind{a}{c}} \frac{a_{k}^2x_k^2}{p_{k}}
= \sum_{k \in \botind{a}{c}} \frac{\norm{\lowerthres{a}{c}}^2_2a_{k}^2x_k^2}{a_{k}^2}
= \sum_{k \in \botind{a}{c}} \norm{\lowerthres{a}{c}}^2_2x_k^2\nonumber\\
&\leq \norm{\lowerthres{a}{c}}^2_2\norm{x}_2^2 \label{eqn:h12}
\end{align}
Using the moments calculated in \eqref{eqn:h11} and \eqref{eqn:h12}, we calculate the first and second moments of the estimator $(\widehat{a^\top x})_c$
\begin{equation*}
\begin{aligned}
\E[(\widehat{a^\top x})_c] &= \E\left[\upperthres{a}{c}^\top x + \frac{1}{c}\sum_{i \in [c]}X_i\right]\\
&= \upperthres{a}{c}^\top x + \frac{1}{c}\sum_{i \in [c]}\E[X_i]\\
&= \upperthres{a}{c}^\top x + \frac{1}{c}\sum_{i \in [c]} \lowerthres{a}{c}^\top x\\
&= \upperthres{a}{c}^\top x + \lowerthres{a}{c}^\top x\\
&= a^\top x\\
\end{aligned}
\end{equation*}
\begin{equation*}
\begin{aligned}
\E\left[(\widehat{a^\top x})^2_c\right] &= \E\left[\left(\upperthres{a}{c}^\top x + \frac{1}{c}\sum_{i \in [c]}X_i\right)^2\right] \\
&= \E\left[(\upperthres{a}{c}^\top x)^2 + 2\upperthres{a}{c}^\top x\frac{1}{c}\sum_{i \in [c]}X_i + \frac{1}{c^2}\left(\sum_{i \in [c]}X_i\right)^2\right] \\
&= \E\left[(\upperthres{a}{c}^\top x)^2\right] + \E\left[2\upperthres{a}{c}^\top x\frac{1}{c}\sum_{i \in [c]}X_i\right] + \E\left[\frac{1}{c^2}\left(\sum_{i \in [c]}X_i\right)^2\right] \\
&= (\upperthres{a}{c}^\top x)^2 + 2\upperthres{a}{c}^\top x\frac{1}{c}\sum_{i \in [c]}\E[X_i] + \E\left[\frac{1}{c^2}\left(\sum_{i \in [c]}X_i\right)^2\right] \\
\end{aligned}
\end{equation*}
Using the expectation of the random variable $X_i$, calculated in \eqref{eqn:h11}
\begin{equation*}
\begin{aligned}
\E\left[(\widehat{a^\top x})^2_c\right] &= (\upperthres{a}{c}^\top x)^2 + 2\upperthres{a}{c}^\top x\frac{1}{c}\sum_{i \in [c]}\lowerthres{a}{c}^\top x + \E\left[\frac{1}{c^2}\left(\sum_{i \in [c]}X_i\right)^2\right] \\
&= (\upperthres{a}{c}^\top x)^2 + 2\upperthres{a}{c}^\top x\lowerthres{a}{c}^\top x + \frac{1}{c^2}\E\left[\sum_{i \in [c]}X_i^2 + \sum_{i,j \in [c], i \neq j}X_i \cdot X_j\right] \\
\end{aligned}
\end{equation*}
Using the independence of $X_i$ and $X_j$, we get that
\begin{equation*}
\begin{aligned}
\E[(\widehat{a^\top x})^2_c] &= (\upperthres{a}{c}^\top x)^2 + 2\upperthres{a}{c}^\top x\lowerthres{a}{c}^\top x + \frac{1}{c^2}\left(\sum_{i \in [c]}\E\left[X_i^2\right] + \sum_{i,j \in [c], i \neq j}\E[X_i] \cdot \E[X_j]\right) \\
\end{aligned}
\end{equation*}
Using the first and second moments of the random variable $X_i$, calculated in \eqref{eqn:h11} and \eqref{eqn:h12}.
\begin{equation*}
\begin{aligned}
\E[(\widehat{a^\top x})^2_c] &= (\upperthres{a}{c}^\top x)^2 + 2\upperthres{a}{c}^\top x\lowerthres{a}{c}^\top x \nonumber\\
&+ \frac{1}{c^2}\left(\sum_{i \in [c]}\norm{\lowerthres{a}{c}}^2_2\norm{x}_2^2 + \sum_{i,j \in [c], i \neq j}\lowerthres{a}{c}^\top x \lowerthres{a}{c}^\top x\right) \\
&= (\upperthres{a}{c}^\top x)^2 + 2\upperthres{a}{c}^\top x\lowerthres{a}{c}^\top x \nonumber\\
&+ \frac{1}{c^2}\left(c\norm{\lowerthres{a}{c}}^2_2\norm{x}_2^2 + c(c-1)\lowerthres{a}{c}^\top x \lowerthres{a}{c}^\top x\right) \\
&= (\upperthres{a}{c}^\top x)^2 + 2\upperthres{a}{c}^\top x\lowerthres{a}{c}^\top x + \frac{1}{c}\norm{\lowerthres{a}{c}}^2_2\norm{x}_2^2 + \left(1 - \frac{1}{c}\right)\lowerthres{a}{c}^\top x \lowerthres{a}{c}^\top x \\
&\leq (\upperthres{a}{c}^\top x)^2 + 2\upperthres{a}{c}^\top x\lowerthres{a}{c}^\top x + \frac{1}{c}\norm{\lowerthres{a}{c}}^2_2\norm{x}_2^2 + \lowerthres{a}{c}^\top x \lowerthres{a}{c}^\top x\\
\end{aligned}
\end{equation*}
Using  $(\upperthres{a}{c}^\top x)^2 +  2\upperthres{a}{c}^\top x\lowerthres{a}{c}^\top x + \lowerthres{a}{c}^\top x \lowerthres{a}{c}^\top x = (a^\top x)^2$, we get that
\begin{equation*}
\begin{aligned}
\E[(\widehat{a^\top x})^2_c] &\leq (a^\top x)^2 + \frac{1}{c}\norm{\lowerthres{a}{c}}^2_2\norm{x}_2^2\\
\end{aligned}
\end{equation*}
\end{proof}
\hthreevariance*
\begin{proof}
\label{proof:h3variance}
Since, $(\widehat{aa^\top x})_c = (\widehat{a})_c(\widehat{a^\top x})_c$ where $(\widehat{a})_c, (\widehat{a^\top x})_c$ are the estimators for $a$ and $a^\top x$ defined in Lemma \ref{lemma:h1variance} and Lemma \ref{lemma:h2variance} respectively and formed using independent samples.
First calculating the expectation of $(\widehat{aa^\top x})_c$
\begin{equation*}
\begin{aligned}
\E[(\widehat{aa^\top x})_c] = \E[(\widehat{a})_c(\widehat{a^\top x})_c] = \E[(\widehat{a})_c]\E[(\widehat{a^\top x})_c] = aa^\top x
\end{aligned}
\end{equation*}
The above proof uses the fact that $(\widehat{a})_c$ and $(\widehat{a^\top x})_c$ are estimated using independent samples.
Now, calculating the second moment of $\norm{(\widehat{aa^\top x})_c}_2$, we get that
\begin{equation*}
\begin{aligned}
\E\left[\norm{(\widehat{aa^\top x})_c}_2^2\right] &= \E\left[\norm{(\widehat{a})_c(\widehat{a^\top x})_c}_2^2\right]\\
&= \E\left[\norm{(\widehat{a})_c}_2^2\norm{(\widehat{a^\top x})_c}_2^2\right]\\
&= \E\left[\norm{(\widehat{a})_c}_2^2\right]\E\left[(\widehat{aa^\top x})^2_c\right]\\
&\leq \norm{a}_2^2\left(1 + \frac{s(a)}{c}\right)\left((a^\top x)^2 + \frac{1}{c}\norm{\lowerthres{a}{c}}^2_2\norm{x}_2^2\right)
\end{aligned}
\end{equation*}
Now, using Lemma \ref{lemma:sparsityratio}, we know that $\norm{\lowerthres{a}{c}}^2_2 \leq \frac{s(a)}{c}\norm{a}^2_2$\\
Thus, we get that
\begin{equation*}
\begin{aligned}
\E\left[\norm{(\widehat{aa^\top x})_c}_2^2\right] &\leq \norm{a}_2^2\left(1 + \frac{s(a)}{c}\right)\left((a^\top x)^2 + \frac{s(a)}{c^2}\norm{a}^2_2\norm{x}_2^2\right)
\end{aligned}
\end{equation*}
\end{proof}
\hfourvariance*
\begin{proof}
\label{proof:h4variance}
Since, $(\widehat{\mat{A}^\top \mat{A}x})_k = \frac{1}{p_i}(\widehat{a_i{a_i}^\top x})_{c_i}$ with probability
$p_i = \frac{\norm{a_i}_2^2}{M}\left(1 + \frac{s_i}{c_i}\right)$ where $M$ is the normalization constant
 where $(\widehat{a_ia_i^\top x})_{c_i}$ is the estimator of $a_ia_i^\top x$ defined in Lemma~\ref{lemma:h3variance}
  and are formed independently of each other and independently of $i$
  where $s_i = s(a_i)$ and $k$ is some parameter such that $c_i = \sqrt{s_i}k \leq d$.
Calculating the expectation of $(\widehat{\mat{A}^\top \mat{A}x})_k$, we get
\begin{equation*}
\begin{aligned}
\E\left[(\widehat{\mat{A}^\top \mat{A}x})_k\right] &= \E_i \left[\frac{1}{p_i}\E[(\widehat{a_i{a_i}^\top x})_{c_i}]\right]
= \E_i\left[\frac{1}{p_i}a_ia_i^\top x\right]
= \sum_i\frac{p_i}{p_i}a_ia_i^\top x
= \sum_ia_ia_i^\top x
= \mat{A}^\top \mat{A}x
\end{aligned}
\end{equation*}
In the proof above, we used the expectation of $(\widehat{a_i{a_i}^\top x})_{c_i}$ calculated in Lemma~\ref{lemma:h3variance} and
also used that $i$ and $(\widehat{a_i{a_i}^\top x})_{c_i}$ are chosen independently of each other.

Calculating the second moment of $(\widehat{\mat{A}^\top \mat{A}x})_k$, we get
\begin{equation*}
\begin{aligned}
\E\left[\normFull{(\widehat{\mat{A}^\top \mat{A}x})_k}_2^2\right] = \E\left[\normFull{\frac{1}{p_i}(\widehat{a_i{a_i}^\top x})_{c_i}}_2^2\right] &= \E_i\left[\frac{1}{p^2_i}\E\left[\norm{(\widehat{a_i{a_i}^\top x})_{c_i}}_2^2\right]\right]\\
&\leq \E_i\left[\frac{1}{p^2_i}\norm{a_i}_2^2\left(1 + \frac{s_i}{c_i}\right)\left((a_i^\top x)^2 + \frac{s_i}{c_i^2}\norm{a_i}^2_2\norm{x}_2^2\right)\right]\\
&\leq \sum_i\frac{1}{p_i}\norm{a_i}_2^2\left(1 + \frac{s_i}{c_i}\right)\left((a_i^\top x)^2 + \frac{s_i}{c_i^2}\norm{a_i}^2_2\norm{x}_2^2\right)\\
\end{aligned}
\end{equation*}
Putting the value of $p_i = \frac{\norm{a_i}_2^2}{M}\left(1 + \frac{s_i}{c_i}\right)$, we get
\begin{equation*}
\begin{aligned}
\E\left[\normFull{(\widehat{\mat{A}^\top \mat{A}x})_k}_2^2\right] &\leq M\sum_i\left((a_i^\top x)^2 + \frac{s_i}{c_i^2}\norm{a_i}^2_2\norm{x}_2^2\right)\\
\end{aligned}
\end{equation*}
Now, putting the value of $c_i = \sqrt{s_i}k$, we get
\begin{equation*}
\begin{aligned}
\E\left[\normFull{(\widehat{\mat{A}^\top \mat{A}x})_k}_2^2\right] &\leq M\sum_i\left((a_i^\top x)^2 + \frac{1}{k^2}\norm{a_i}^2_2\norm{x}_2^2\right)\\
&\leq M\left(\norm{\mat{A}x}_2^2 + \frac{1}{k^2}\norm{\mat{A}}^2_F\norm{x}_2^2\right)\\
\end{aligned}
\end{equation*}
\end{proof}
\subsection{Application Proofs}
\label{sec:app_proofs}

\subsubsection{Top Eigenvector Computation}
\vareigvecot*
\begin{proof}
Using the unbiasedness of the estimator $(\widehat{\mat{A}^\top \mat{A}x})_k$ from Lemma \ref{lemma:h4variance}, we get
$
\E[\nabla g(x)] = \E[\lambda x - (\widehat{\mat{A}^\top \mat{A}x})_k] = \lambda x - \mat{A}^\top \mat{A}x \\
$
Putting the second moment of $(\widehat{\mat{A}^\top \mat{A}x})_k$ from Lemma \ref{lemma:h4variance} and since in one iteration,
the same $\nabla g(x)$ is picked, the randomness in $\nabla g(x)$ and $\nabla g(x^*)$ is same, therefore, we get
\begin{align}
\E\left[\norm{\nabla g(x) - \nabla g(x^*)}_2^2\right]  &= \E\left[\norm{\lambda(x - x^*) - (\widehat{\mat{A}^\top \mat{A}(x - x^*)})_k}_2^2\right] \nonumber\\
&= \lambda^2\norm{x - x^*}_2^2 - 2\lambda(x - x^*)^\top \E\left[(\widehat{\mat{A}^\top \mat{A}(x - x^*)})_k\right] \nonumber\\
&+ \E\left[\norm{(\widehat{\mat{A}^\top \mat{A}(x - x^*)})_k}_2^2\right]\nonumber
\end{align}
Using the unbiasedness and second moment bound of $(\widehat{\mat{A}^\top \mat{A}x})_k$ from Lemma \ref{lemma:h4variance}, we get
\begin{equation*}
\begin{aligned}
\E\left[\norm{\nabla g(x) - \nabla g(x^*)}_2^2\right] &= \lambda^2(x - x^*)^2 -2\lambda\norm{\mat{A}(x - x^*)}^2_2 \nonumber\\
&+ M\left(\norm{\mat{A}(x - x^*)}_2^2 + \frac{1}{k^2}\norm{\mat{A}}^2_F\norm{x - x^*}_2^2\right) \nonumber\\
&\leq \lambda^2(x - x^*)^2 + M\left(\norm{\mat{A}(x - x^*)}_2^2 + \frac{1}{k^2}\norm{\mat{A}}^2_F\norm{x - x^*}_2^2\right)
\end{aligned}
\end{equation*}
Now, since we want to relate the variance of our gradient estimator the function error to be
used in the SVRG framework, using the strong convexity parameter of the matrix $\mat{B}$, we
get the following:
\begin{equation*}
\begin{aligned}
\E\left[\norm{\nabla g(x) - \nabla g(x^*)}_2^2\right] &\leq \norm{x - x^*}_{\mat{B}}^2\frac{\lambda^2}{\lambda - \lambda_1} + M\norm{x - x^*}_{\mat{B}}^2\left(\frac{\lambda_1}{\lambda - \lambda_1} + \frac{\norm{\mat{A}}_F^2}{(\lambda - \lambda_1)k^2}\right)
\end{aligned}
\end{equation*}
Now, using $k^2 = \sr(\mat{A})$ and rewriting the equation in terms
of problem parameters, $\sr(\mat{A}) = \frac{\norm{\mat{A}}_F^2}{\lambda_1}$ and using $ \left(1 + \frac{\gap}{150}\right)\lambda_1  \leq \lambda \leq 2\lambda_1$, we get
\begin{equation*}
\begin{aligned}
\E\left[\norm{\nabla g(x) - \nabla g(x^*)}_2^2\right] &\leq \norm{x - x^*}_{\mat{B}}^2\frac{4\lambda_1}{\gap} + \frac{2\norm{x - x^*}_{\mat{B}}^2M}{\gap}
\leq 2(f(x) - f(x^*))\left(\frac{4\lambda_1}{\gap} + \frac{2M}{\gap}\right)\\
\end{aligned}
\end{equation*}
It is easy to see that $\lambda_1 \leq M$ and hence, the second term always upper bounds the first term,
thus we get the desired variance bound.
Note from Theorem \ref{thrm:SVRG}, we know that the gradient update is of the following form.
$$x_{k+1} :=  x_k - \eta (\nabla g_{i_k}(x_k) - \nabla g_{i_k}(x_0)) + \eta \nabla f(x_0) ~.$$
Note, the estimator $(\widehat{\mat{A}^\top \mat{A}x})_k$ uses $(\hat{a})_{c_i}$ and $(\widehat{a^\top x})_{c_i}$ estimators internally
which both use $c_i = \sqrt{s_i}k$ coordinates. Sampling to get the estimator can be done in $O(d)$ preprocessing time and $O(1)$ time per sample \cite{kronmal1979alias} and hence $O(c_i)$ time for $c_i$ samples. Also, $\lambda x_k$ can be added to $x_k$ in $O(1)$ time by just maintaining a multiplicative coefficient of
the current iterate and doing the updates accordingly. Hence, our estimator of $\nabla g_{i_k}(x_k) - \nabla g_{i_k}(x_0)$ can be implemented in $O(c_i)$ time when the
ith row is chosen. Furthermore, the dense part $\nabla f(x_0) = \mat{B}x_0 - b$ can be added in $O(1)$ time by separately maintaining
the coefficient of this fixed vector in each $x_k$ and using it as necessary to calculate the $O(c_i)$ coordinates during each iteration.
Consequently, we can bound the total expected time for implementing the iterations by
\begin{equation*}
\begin{aligned}
T &= \sum_i p_ic_i = \sum_i \frac{\norm{a_i}_2^2}{M}\left(1 + \frac{s_i}{c_i}\right)c_i = \sum_i \frac{\norm{a_i}_2^2}{M}(c_i + s_i) = \sum_i \frac{\norm{a_i}_2^2}{M}\left(\sqrt{s_i\sr(\mat{A})} + s_i\right)\\
\end{aligned}
\end{equation*}
\end{proof}

\begin{theorem}[Theorem 16 of \cite{garber2016faster}] \label{thrm:garber}Let us say we have a linear system solver that finds $x$ such that:
$$\E\norm{x - x^*}^2_{\mat{B}} \leq \frac{1}{2}\norm{x_0 - x^*}_{\mat{B}}^2$$
in time $O(T)$ where $f(x) = \frac{1}{2}x^\top \mat{B}x - b^\top x$, $\mat{B} = \lambda \mat{I} - \mat{A}^\top \mat{A}$ and $x_0$ is some initial point. Then we can find an $\epsilon$-approximation $v$ to the top eigenvector of $\mat{A}^\top \mat{A}$ in time $O\left(T \cdot \left(\log^2\left(\frac{d}{\gap}\right) + \log\left(\frac{1}{\epsilon}\right)\right) \right)$ where $ \left(1 + \frac{\gap}{150}\right)\lambda_1  \leq \lambda \leq \left(1 + \frac{\gap}{100}\right)\lambda_1$
\end{theorem}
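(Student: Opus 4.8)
The plan is to implement the shift-and-invert preconditioning framework rather than running power method directly on $\mat{A}^\top\mat{A}$, whose convergence rate degrades like $1/\gap$. First I would record the two structural facts that make $\mat{B}^{-1} = (\lambda\mat{I} - \mat{A}^\top\mat{A})^{-1}$ the right object: it shares its top eigenvector with $\mat{A}^\top\mat{A}$, and its eigenvalues are $1/(\lambda - \lambda_i(\mat{A}^\top\mat{A}))$, so the top two are $1/(\lambda-\lambda_1)$ and $1/(\lambda-\lambda_2)$. Writing $\lambda = (1+c\gap)\lambda_1$ and using $\lambda_1 - \lambda_2 = \gap\cdot\lambda_1$, a one-line calculation gives that the relative gap of $\mat{B}^{-1}$ equals $1/(1+c) = \Theta(1)$ for every $c$ in the stated window $[1/150,\,1/100]$. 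Hence power method on $\mat{B}^{-1}$ enjoys a constant spectral gap and converges geometrically, needing only $O(\log(d/\epsilon))$ iterations to produce an $\epsilon$-approximate top eigenvector.

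Next I would replace the exact application of $\mat{B}^{-1}$ in each power step by an inexact solve. Each outer iteration applies $\mat{B}^{-1}$ to the current iterate, i.e. solves a linear system in $\mat{B}$; the hypothesis supplies a solver that contracts the $\mat{B}$-norm error by a constant factor in time $O(T)$, so repeating it $O(\log(d/\gap))$ times drives the solve error below any fixed inverse-polynomial threshold. The technical content is a perturbation analysis showing that, because the outer gap is constant, it suffices to apply $\mat{B}^{-1}$ to within accuracy $\mathrm{poly}(\gap/d)$ for the standard power-method potential (the tangent of the angle to $v^*$, or the normalized Rayleigh-quotient deficit) to still shrink by a constant factor per outer step. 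Multiplying the boosted inner cost $O(T\log(d/\gap))$ by the $O(\log(d/\epsilon))$ outer iterations produces the $\log(1/\epsilon)$ contribution to the final running time.

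Because $\lambda_1$ is not known in advance, I would then prepend an eigenvalue-location phase that pins $\lambda$ into the required window. Using the gap-free eigenvalue estimation of \cite{musco2015randomized} together with a geometric search over candidate shifts, the number of candidate values of $\lambda$ is $O(\log(d/\gap))$, and each candidate is tested by a constant number of solver-based power-method runs of the type analyzed above. Composing the $O(\log(d/\gap))$ candidate search with the inner accuracy requirement is exactly what yields the $\log^2(d/\gap)$ factor in the final bound, matching the accounting in Section 6 of \cite{garber2016faster}.

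The main obstacle is the inexactness analysis of the second step: one must show that the coarse, constant-factor $\mat{B}$-norm guarantee — which is all that the SVRG-based inner solver provides — can be amplified by repetition into an application of $\mat{B}^{-1}$ precise enough that the outer power method still converges, while ensuring that the accuracy thresholds and any failure probabilities compose without inflating the iteration count past the claimed logarithmic factors. The potential-function argument governing the interaction of the outer iterations with the inexact inner solves, including the renormalization step and the warm-starting of each solve from the previous iterate (needed to keep the per-solve cost at $O(T)$), is the heart of the matter; since $\mat{B}$ and the assumed solver guarantee match the hypotheses of \cite{garber2016faster} verbatim, I would invoke their analysis directly rather than reprove it.
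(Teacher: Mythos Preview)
The paper does not actually prove this statement: it is imported verbatim as Theorem~16 of \cite{garber2016faster}, with only the remark that any constant-factor $\mat{B}$-norm solver can be substituted for theirs. So your final sentence --- invoke \cite{garber2016faster} directly --- is in fact all that the paper does.

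That said, your accounting of \emph{where} the two logarithmic terms come from does not match the actual analysis in \cite{garber2016faster}, and this matters because your version would yield a weaker bound. Two points:

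\textbf{(1) The shift $\lambda$ is given, not searched for.} The hypothesis already places $\lambda$ in the window $\bigl[(1+\tfrac{\gap}{150})\lambda_1,\,(1+\tfrac{\gap}{100})\lambda_1\bigr]$, so there is no eigenvalue-location phase inside this theorem. Attributing the $\log^2(d/\gap)$ factor to a search over candidate shifts is incorrect here; that search is handled elsewhere (Section~6 of \cite{garber2016faster}) and is not part of the cost being bounded.

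\textbf{(2) The $\log^2(d/\gap)$ and $\log(1/\epsilon)$ terms come from two \emph{different} phases of the inverted power method, with different inner-solve accuracies.} In the burn-in phase one starts from a random Gaussian vector and runs $O(\log(d/\gap))$ outer iterations to reach a crude potential bound; each of these outer steps needs the linear system solved to accuracy $1/\mathrm{poly}(d,1/\gap)$, hence $O(\log(d/\gap))$ calls to the constant-factor solver, giving the $O\bigl(T\log^2(d/\gap)\bigr)$ term. Once the potential is below a fixed constant, each subsequent outer step contracts it by a constant factor \emph{and} only requires a single constant-accuracy solve (this is the content of the warm-start analysis in \cite{garber2016faster}); thus the final $O(\log(1/\epsilon))$ outer steps cost $O(T)$ each, giving the additive $O\bigl(T\log(1/\epsilon)\bigr)$ term. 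Your single-phase scheme --- $O(\log(d/\epsilon))$ outer steps each using an $O(\log(d/\gap))$-boosted inner solve --- produces $O\bigl(T\log(d/\gap)\log(1/\epsilon)\bigr)$ for the $\epsilon$-dependent part, which is off by a $\log(d/\gap)$ factor from the stated bound.
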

Theorem 16 of \cite{garber2016faster} states the running time in terms of the running time required for their linear system solver but it can be replaced with any other $\epsilon$-approximate linear system solver.
\begin{theorem} [Linear System Solver Runtime for $\mat{B} = \lambda \mat{I} - \mat{A}^\top \mat{A}$] \label{thrm:eigsvrgruntime}
For a matrix $\mat{B} = \lambda \mat{I} - \mat{A}^\top \mat{A}$, we have an algorithm which returns $x$ such that
$\E\norm{x - x^*}_{\mat{B}}^2 \leq \frac{1}{2}\norm{x^0 - x^*}^2_{\mat{B}}$ in running time $$O\left(\nnz(\mat{A}) + \frac{1}{\gap^2\lambda_1}\sum_i \norm{a_i}_2^2(\sqrt{s_i} + \sqrt{\sr(\mat{A})})\sqrt{s_i}\right)$$ assuming $\lambda_1(1 + \frac{\gap}{150}) \leq \lambda \leq \lambda_1(1 + \frac{\gap}{100})$ where $x^* = \argmin_{x \in \R^d} \frac{1}{2}x^\top \mat{B}x - c^\top x$
\end{theorem}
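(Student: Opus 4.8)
The plan is to instantiate the SVRG framework of Theorem~\ref{thrm:SVRG} on the quadratic $f(x) = \frac{1}{2}x^\top \mat{B}x - c^\top x$, driving it with the unbiased gradient estimator $\nabla g$ and the variance bound supplied by Lemma~\ref{lemma:vareigvecot}. Matching that lemma's bound $\E\norm{\nabla g(x) - \nabla g(x^*)}_2^2 \leq \frac{8M}{\gap}(f(x) - f(x^*))$ against the hypothesis of Theorem~\ref{thrm:SVRG} identifies the variance parameter as $\sigma^2 = 4M/\gap$. The strong convexity parameter is the smallest eigenvalue of $\mat{B} = \lambda\mat{I} - \mat{A}^\top\mat{A}$, namely $\lambda - \lambda_1$. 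First I would record, from the hypothesis $\lambda \geq (1 + \gap/150)\lambda_1$, that $\lambda - \lambda_1 \geq \frac{\gap}{150}\lambda_1$, so that $\mat{B}$ is $\Omega(\gap\lambda_1)$-strongly convex; this is exactly what lets the gap enter the iteration count.

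Next I would invoke the conclusion of Theorem~\ref{thrm:SVRG}: with step size a small multiple of $1/\sigma^2$ and $m = O(\sigma^2/\mu)$ inner steps, one epoch decreases the function error by a fixed constant factor. Substituting $\sigma^2 = 4M/\gap$ and $\mu = \lambda - \lambda_1 \geq \frac{\gap}{150}\lambda_1$ gives
$$m = O\!\left(\frac{\sigma^2}{\mu}\right) = O\!\left(\frac{M}{\gap^2\lambda_1}\right).$$
The per-iteration cost is $T = \frac{1}{M}\sum_i \norm{a_i}_2^2\bigl(s_i + \sqrt{s_i\sr(\mat{A})}\bigr)$, also from Lemma~\ref{lemma:vareigvecot}. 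The key point is that the normalization constant $M$ cancels in the product $m\cdot T$, leaving an inner-loop cost of
$$O\!\left(\frac{1}{\gap^2\lambda_1}\sum_i \norm{a_i}_2^2\sqrt{s_i}\bigl(\sqrt{s_i} + \sqrt{\sr(\mat{A})}\bigr)\right),$$
using $s_i + \sqrt{s_i\sr(\mat{A})} = \sqrt{s_i}(\sqrt{s_i} + \sqrt{\sr(\mat{A})})$; this is precisely the second term of the claimed running time.

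Finally I would add the two costs incurred once per epoch but not captured by the sampling: the full batch gradient $\nabla f(x_0) = \mat{B}x_0 - c$, computable in $O(\nnz(\mat{A}))$ time via $\mat{A}^\top(\mat{A}x_0)$, and the $O(\nnz(\mat{A}))$ preprocessing needed to set up the alias-method samplers of Algorithms~\ref{alg1} and~\ref{alg2}. Since each epoch shrinks the error by a constant factor, $O(1)$ epochs suffice, so the total is $O(\nnz(\mat{A}) + m\cdot T)$, matching the stated bound. The one conceptual step to make explicit is that Theorem~\ref{thrm:SVRG} controls the \emph{function} error, whereas the statement asks for the $\mat{B}$-norm error; for this quadratic they agree up to a factor of two, since $f(x) - f(x^*) = \frac{1}{2}\norm{x - x^*}_{\mat{B}}^2$ (the exact analogue of Lemma~\ref{lemma:functiondiff} with $\mat{A}^\top\mat{A}$ replaced by $\mat{B}$). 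Outputting a uniformly random iterate of the epoch therefore converts the constant-factor decrease in expected function error into the required $\E\norm{x - x^*}_{\mat{B}}^2 \leq \frac{1}{2}\norm{x^0 - x^*}_{\mat{B}}^2$. The main thing to be careful about is this bookkeeping — tightening the constant-factor guarantee to $1/2$ (by the choice of constants or a constant number of epochs) and applying the $M$-cancellation and the strong-convexity lower bound consistently; there is no deeper obstacle beyond assembling the two cited results.
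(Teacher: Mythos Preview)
Your proposal is correct and follows essentially the same route as the paper: instantiate Theorem~\ref{thrm:SVRG} with the estimator and variance bound of Lemma~\ref{lemma:vareigvecot}, read off $\sigma^2 = 4M/\gap$ and $\mu = \lambda - \lambda_1$, multiply the iteration count $O(\sigma^2/\mu)$ by the per-iteration cost $T$ so that $M$ cancels, and add $O(\nnz(\mat{A}))$ for the full gradient and sampler setup. Your write-up is in fact slightly more careful than the paper's, since you make explicit the lower bound $\lambda - \lambda_1 \geq \frac{\gap}{150}\lambda_1$ and the conversion between function error and $\mat{B}$-norm error via $f(x) - f(x^*) = \tfrac{1}{2}\norm{x - x^*}_{\mat{B}}^2$.
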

\begin{proof}
  The problem of solving $\min f(x)$ where $f(x) = \frac{1}{2}x^\top \mat{B}x - c^\top x$ can be solved by using the SVRG framework
  defined in Theorem \ref{thrm:SVRG} with the strong convexity parameter $\mu = \lambda - \lambda_1(\mat{A}^\top \mat{A})$. Using the estimator for $\nabla f(x)$ defined in
   Lemma \ref{lemma:vareigvecot}, we get the corresponding variance defined in Theorem \ref{thrm:SVRG} as $\frac{4M}{\gap}$ i.e. $\sigma^2 = \frac{4M}{\gap}$ where $M$ is as defined in
   Lemma \ref{lemma:vareigvecot}. Therefore, according to Theorem \ref{thrm:SVRG}, we can decrease the error by a constant factor in total
   number of iterations $O(\frac{4M}{\gap(\lambda - \lambda_1)})$. The expected time taken per iteration is $T = \sum_i \frac{\norm{a_i}_2^2}{M}\left(s_i + \sqrt{\sr(\mat{A})s_i}\right)$
    as defined in Lemma \ref{lemma:vareigvecot}. Now, we can argue that the total expected time taken per iteration would be $T\frac{\sigma^2}{\lambda - \lambda_1}$.
    Therefore, the total time expected taken to decrease the error by a constant factor would be
    $O\left(\nnz(\mat{A}) + \frac{1}{\lambda - \lambda_1}\frac{M}{\gap}\sum_i \frac{\norm{a_i}_2^2}{M}\left(s_i + \sqrt{s_i\sr(\mat{A})}\right)\right)$. \\
  Simplifying, we get
  \begin{equation*}
  \begin{aligned}
  \frac{1}{\lambda - \lambda_1}\frac{M}{\gap}\sum_i \frac{\norm{a_i}_2^2}{M}\left(s_i + \sqrt{s_isr(\mat{A})}\right)
  = \frac{1}{\gap^2}\sum_i \frac{\norm{a_i}_2^2}{\lambda_1}\left(\sqrt{s_i} + \sqrt{sr(\mat{A})}\right)\sqrt{s_i}\\
  \end{aligned}
  \end{equation*}
We can argue this holds with constant probability by using Markov's inequality.
\end{proof}
\eigruntime*
\begin{proof}
  This follows directly from combining Theorem \ref{thrm:garber} and Theorem \ref{thrm:eigsvrgruntime}.
\end{proof}
\acceigruntime*
\begin{proof}
The total running time for solving the regularized linear system in $\mat{B} + \gamma \mat{I} = (\lambda + \gamma)I - \mat{A}^\top \mat{A}$ upto constant accuracy is $O\left(\nnz(\mat{A}) + \frac{2\lambda_1(\mat{A}^\top \mat{A})}{(\lambda + \gamma - \lambda_1)^2}\sum_i \norm{a_i}_2^2\left(\sqrt{s_i} + \sqrt{\sr(\mat{A})}\right)\sqrt{s_i}\right)$ by Theorem~\ref{thrm:eigsvrgruntime}.
  Hence, the total running time for solving the unregularized linear system in $\mat{B}$ according to Theorem~\ref{thrm:acceleration} will be \\$O\left(\left(\nnz(\mat{A}) + \frac{2\lambda_1}{(\lambda + \gamma - \lambda_1)^2}\sum_i \norm{a_i}_2^2\left(\sqrt{s_i} + \sqrt{\sr(\mat{A})}\right)\sqrt{s_i}\right)\log\left(\frac{2\gamma}{\lambda - \lambda_1}\right)\sqrt{\frac{\gamma}{\lambda - \lambda_1}}\right)$
   assuming $\gamma \geq 2(\lambda - \lambda_1)$ by the assumption of the theorem.

  Balancing the two terms, we get that $$\gamma = \sqrt{\frac{2\lambda_1}{\nnz(\mat{A})}\sum_i \norm{a_i}_2^2\left(\sqrt{s_i} + \sqrt{\sr(\mat{A})}\right)\sqrt{s_i}}$$
  Putting this in the total runtime and using $\frac{\gamma}{\lambda - \lambda_1} \leq \sqrt{\frac{d\norm{\mat{A}}_F^2}{\lambda_1\nnz(\mat{A})\gap^2}}$, we get a total runtime of $$\tilde{O}\left(\nnz(\mat{A}) + \nnz(\mat{A})\left({\frac{\lambda_1}{\nnz(\mat{A})(\lambda - \lambda_1)^2}\sum_i \norm{a_i}_2^2\left(\sqrt{s_i} + \sqrt{\sr(\mat{A})}\right)\sqrt{s_i})}\right)^{\frac{1}{4}}\right)$$ where $\tilde{O}$ hides a factor of $\log\left(\frac{d\norm{\mat{A}}_F^2}{\lambda_1\nnz(\mat{A})\gap^2}\right)$.

  Since $\frac{\norm{\mat{A}}_F^2}{\lambda_1} \leq d$ and $\nnz(\mat{A}) \geq 1$, we get a running time of
  $$O\left(\left(\nnz(\mat{A}) + \frac{\nnz(\mat{A})^{\frac{3}{4}}}{\sqrt{\gap}}\left({\sum_i \frac{\norm{a_i}_2^2}{\lambda_1}\left(\sqrt{s_i} + \sqrt{\sr(\mat{A})}\right)\sqrt{s_i})}\right)^{\frac{1}{4}}\right)\log\left(\frac{d}{\gap}\right)\right)$$
 for solving a linear system
  in $B$. Then, the final running time is obtained from combining Theorem \ref{thrm:garber} with the time for the linear system solver in $\mat{B}$ obtained above.
\end{proof}
\subsubsection{Regression}

\varlinreg*
\begin{proof}
Since $(\widehat{\mat{A}^\top \mat{A}x})_k$ is an unbiased estimate from Lemma~\ref{lemma:h4variance}, we get
\(
\E[\nabla g(x)]
= \E[(\widehat{\mat{A}^\top \mat{A}x})_k]
= \mat{A}^\top \mat{A}x
\)

To calculate $\E\left[\norm{\nabla g(x) - \nabla g(x^*)}_2^2\right]$,
using the second moment of $(\widehat{\mat{A}^\top \mat{A}x})_k$ from Lemma \ref{lemma:h4variance} and since in one iteration,
the same $\nabla g(x)$ is picked, the randomness in $\nabla g(x)$ and $\nabla g(x^*)$ is same, we get
\begin{align}
\E\left[\norm{\nabla g(x) - \nabla g(x^*)}_2^2\right] &= \E\left[\norm{(\widehat{\mat{A}^\top \mat{A}(x - x^*)})_k}_2^2\right]\nonumber\\
&\leq M\left(\norm{\mat{A}(x - x^*)}_2^2 + \frac{1}{k^2}\norm{\mat{A}}^2_F\norm{x - x^*}_2^2\right)\nonumber
\end{align}
Putting the value of $k = \sqrt{\kappa
}$, and using strong convexity, $\norm{\mat{A}(x - x^*)}_2^2 \geq \mu\norm{x - x^*}_2^2$ we get that
\begin{align*}
\E\left[\norm{\nabla g(x) - \nabla g(x^*)}_2^2\right]
&\leq M\left(\norm{\mat{A}(x - x^*)}_2^2 + \norm{\mat{A}(x - x^*)}_2^2\right)
\leq 2M\norm{\mat{A}(x - x^*)}_2^2
\end{align*}
Note from Theorem \ref{thrm:SVRG}, we know that the gradient update is of the following form.
\abovedisplayshortskip=5pt
\[
x_{k+1} :=  x_k - \eta (\nabla g_{i_k}(x_k) - \nabla g_{i_k}(x_0)) + \eta \nabla f(x_0)
\]
Note, the estimator $(\widehat{\mat{A}^\top \mat{A}x})_k$ uses $(\hat{a})_{c_i}$ and $(\widehat{a^\top x})_{c_i}$ estimators internally
which both use $c_i = \sqrt{s_i}k$ coordinates. Sampling to get the estimator can be done in $O(d)$ preprocessing time and $O(1)$ time
per sample \cite{kronmal1979alias} and hence $O(c_i)$ time for $c_i$ samples. Hence, our estimator of $\nabla g_i(x_k) - \nabla g_i(x_0)$ can be implemented in $O(c_i)$ time when the
ith row is chosen. Furthermore, the dense part $\nabla f(x_0) = \mat{A}^\top (\mat{A}x_0 - b)$ can be added in $O(1)$ time by separately maintaining
the coefficient of this fixed vector in each $x_k$ and using it as necessary to calculate the $O(c_i)$ coordinates during each iteration. Consequently, we can bound the total expected time for implementing the iterations by
\begin{align*}
T
&= \sum_{i \in [n]} p_ic_i
= \sum_{i \in [n]} \frac{\norm{a_i}_2^2}{M}\left(1 + \frac{s_i}{c_i}\right)c_i
= \sum_{i \in [n]} \frac{\norm{a_i}_2^2}{M}(c_i + s_i)
= \sum_{i \in [n]} \frac{\norm{a_i}_2^2}{M}\left(\sqrt{\kappa} + \sqrt{s_i}\right)\sqrt{s_i}
\end{align*}
Now, we know that $s_i \leq d$ and $\kappa \geq d$, hence,
the first term in the above expression always dominates. hence, we get the desired
bound on $T$.
\end{proof}

\runtimeregress*
\begin{proof}
The problem of solving regression $\min f(x)$ where $f(x) = \frac{1}{2}\norm{\mat{A}x - b}_2^2$ can be solved by using the SVRG framework
defined in Theorem \ref{thrm:SVRG} with the strong convexity parameter $\mu = \lambda_d(\mat{A}^\top \mat{A})$. Using the estimator for $\nabla f(x)$ defined in
 Lemma \ref{lemma:varlinreg}, we get the corresponding variance defined in Theorem \ref{thrm:SVRG} as $M$ i.e. $\sigma^2 = M$ where $M$ is as defined in
 Lemma \ref{lemma:varlinreg}. Therefore, according to Theorem \ref{thrm:SVRG}, we can decrease the error by a constant factor in total
 number of iterations $O(\frac{M}{\mu})$. The expected time taken per iteration is $T = \frac{\sqrt{\kappa}}{M}\sum_{i \in [n]} \norm{a_i}_2^2\sqrt{s_i}$
  as defined in Lemma \ref{lemma:varlinreg}. Now, since we know the number of iterations is $O(\frac{\sigma^2}{\mu})$ and expected time per iteration is
  $O(T)$,
  the total expected running time will be $O(T\frac{\sigma^2}{\mu}) = O\left(\nnz(\mat{A}) + \sqrt{\kappa}\sum_{i \in [n]} \frac{\norm{a_i}_2^2}{\mu}\sqrt{s_i}\right)$ to decrease the error by constant
multiplicative factor. Thus, the total expected time
  taken to get an $\epsilon$-approximate solution to the problem would be
 $O\left(\left(\nnz(\mat{A}) + \sqrt{\kappa}\sum_{i \in [n]} \frac{\norm{a_i}_2^2}{\mu}\sqrt{s_i}\right)\log\left(\frac{1}{\epsilon}\right)\right)$.
 We can argue this holds with constant probability using Markov's inequality.
\end{proof}
\accruntimeregress*
\begin{proof}
  \label{proof:acc_runtime_regress}
Solving a regularized least squares problem i.e. $\min_{x}\norm{\mat{A}x - b}_2^2 + \lambda\norm{x - x_0}_2^2$ is equivalent to solving
a problem with a modified matrix $\mat{\tilde{A}}$ with $n + d$ rows with the last $d$ rows having sparsity of $1$ and rows $\tilde{a}_i = a_i$ if $i \leq n$ and $\lambda \hat{e}_{i-n}$ otherwise,
$\tilde{s}_i = s_i$ if $i \leq n$ and $1$ otherwise and $\tilde{\mu} = \mu + \lambda$ and therefore,
by Theorem~\ref{thm:runtime_regress}, the running time for solving the regularized regression up to constant accuracy will be
\[O\left(\nnz(\mat{\tilde{A}}) + \sqrt{\frac{\norm{\mat{\tilde{A}}}^2_F}{\mu + \lambda}}\sum_{i \in [n+d]} \frac{\norm{\tilde{a_i}}^2_2}{\mu + \lambda}\sqrt{\tilde{s}_i}\right)
\]
which is equal to
\[O\left(\nnz(\mat{A}) + d + \sqrt{\frac{\norm{\mat{A}}^2_F + d\lambda^2}{\mu + \lambda}}\sum_{i \in [n]} \frac{\norm{a_i}^2_2}{\mu + \lambda}\sqrt{s_i} + \sqrt{\frac{\norm{\mat{A}}^2_F + d\lambda^2}{\mu + \lambda}}\frac{d\lambda^2}{\mu + \lambda}\right) ~.
\]
Thus, the total running time for solving the unregularized problem will be, by Theorem~\ref{thrm:acceleration}
$$\tilde{O}\left(\left(\nnz(\mat{A}) + d + \sqrt{\frac{\norm{\mat{A}}^2_F + d\lambda^2}{\mu + \lambda}}\sum_{i \in [n]} \frac{\norm{a_i}^2_2}{\mu + \lambda}\sqrt{s_i} + \sqrt{\frac{\norm{\mat{A}}^2_F + d\lambda^2}{\mu + \lambda}}\frac{d\lambda^2}{\mu + \lambda}\right)\sqrt{\frac{\lambda}{\mu}}\right)$$
where $\tilde{O}$ hides a factor of $\log\left(\frac{\lambda + 2\mu}{\mu}\right)$.
Since $\lambda > 2\mu$ by the assumption of Theorem~\ref{thrm:acceleration}, we get $\lambda + \mu = O(\lambda)$, thus the total running time becomes
$$O\left(\left(\nnz(\mat{A}) + d + \sqrt{\frac{\norm{\mat{A}}^2_F + d\lambda^2}{\lambda}}\sum_{i \in [n]} \frac{\norm{a_i}^2_2}{\lambda}\sqrt{s_i} + \sqrt{\frac{\norm{\mat{A}}^2_F + d\lambda^2}{ \lambda}}\frac{d\lambda^2}{\lambda}\right)\sqrt{\frac{\lambda}{\mu}}\log\left(\frac{\lambda}{\mu}\right)\right)$$
Assuming $nnz(\mat{A}) \geq d$ and $\lambda^2 < \frac{\norm{\mat{A}}^2_F}{d}$ since $\lambda$
should be less than the smoothness parameter of the problem.
The running time becomes
$$O\left(\left(\nnz(\mat{A}) + \sqrt{\frac{\norm{\mat{A}}^2_F}{\lambda}}\sum_{i \in [n]} \frac{\norm{a_i}^2_2}{\lambda}\sqrt{s_i} + \sqrt{\frac{\norm{\mat{A}}^2_F}{ \lambda}}\frac{d\lambda^2}{\lambda}\right)\sqrt{\frac{\lambda}{\mu}}\log\left(\frac{\lambda}{\mu}\right)\right)$$
Since, $\lambda^2 \leq\frac{\norm{\mat{A}}^2_F}{d} \leq \frac{1}{d}\sum_{i \in [n]}\norm{a_i}_2^2\sqrt{s_i}$ as $s_i \geq 1$,
hence, we get that the running time becomes
$$O\left(\left(\nnz(\mat{A}) + \sqrt{\frac{\norm{\mat{A}}^2_F}{\lambda}}\sum_{i \in [n]} \frac{\norm{a_i}^2_2}{\lambda}\sqrt{s_i}\right)\sqrt{\frac{\lambda}{\mu}}\log\left(\frac{\lambda}{\mu}\right)\right)$$
Balancing the two terms, we get the value of $\lambda = {\left({\frac{\norm{\mat{A}}_F}{\nnz(\mat{A})}}\sum_{i \in [n]} \norm{a_i}^2_2\sqrt{s_i}\right)}^{\frac{2}{3}}$ which also satisfies our assumption on $\lambda$, hence the total running time becomes $$O\left(\nnz(\mat{A}) + \frac{\nnz(\mat{A})}{\sqrt{\mu}}{\left({\frac{\norm{\mat{A}}_F}{\nnz(\mat{A})}}\sum_{i \in [n]} \norm{a_i}^2_2\sqrt{s_i}\right)}^{\frac{1}{3}}\log\left({\frac{\norm{\mat{A}}_F}{\mu^{\frac{3}{2}}\nnz(\mat{A})}}\sum_{i \in [n]} \norm{a_i}^2_2\sqrt{s_i}\right)\right)$$ where
an additional $\nnz(\mat{A})$ is there because we need to look at all the entries of the matrix once.
Thus, the running time for solving the system upto $\epsilon$ accuracy will be $$O\left(\left(\nnz(\mat{A}) + \nnz(\mat{A})^{\frac{2}{3}}{\kappa}^{\frac{1}{6}}{\left(\sum_{i \in [n]} \frac{\norm{a_i}^2_2}{\mu}\sqrt{s_i}\right)}^{\frac{1}{3}}\log\left(\kappa   \right)\right)\log\left(\frac{1}{\epsilon}\right)\right)$$
\end{proof}
\section{Entrywise Sampling}
\label{entrywisesampling}
In this section, we compute what bounds we get by first doing entrywise sampling on matrix $\mat{A}$ to get $\mat{\tilde{A}}$ and then running
regression on $\mat{\tilde{A}}$.
Let us say we do entrywise sampling on the matrix $\mat{A} \in \R^{n \times d}$ to obtain a matrix $\mat{\tilde{A}} \in \R^{n \times d}$ with $s$ non-zero entries such that $\norm{\mat{A} - \mat{\tilde{A}}}_2 \leq \epsilon$.\\
Then, we can write $\mat{\tilde{A}} = \mat{A} + \mat{E}$ where $\mat{E} \in \R^{n \times d}$ where $\norm{\mat{E}}_2 \leq \epsilon$ \\
To get a bound on $\norm{\mat{A}^\top \mat{A} - \mat{\tilde{A}}^\top \mat{\tilde{A}}}_2$
\begin{align}
\norm{\mat{A}^\top \mat{A} - \mat{\tilde{A}}^\top \mat{\tilde{A}}}_2 &= \norm{\mat{A}^\top \mat{A} - {(\mat{A} + \mat{E})}^\top (\mat{A} + \mat{E})}_2 \nonumber\\
&= \norm{-\mat{A}^\top \mat{E} - \mat{E}^\top \mat{A} - \mat{E}^\top \mat{E}}_2 \nonumber\\
&\leq 2\sigma_{\max}(\mat{A})\epsilon + \epsilon^2 \nonumber
\end{align}
For $\epsilon \leq \frac{\lambda_{\min}(\mat{A}^\top \mat{A})\epsilon'}{2\sigma_{\max}(\mat{A})}$, we get
$$\norm{\mat{A}^\top \mat{A} - \mat{\tilde{A}}^\top \mat{\tilde{A}}}_2 \leq \lambda_{\min}(\mat{A}^\top \mat{A})\epsilon'$$
and thus, we get that $\mat{\tilde{A}}^\top \mat{\tilde{A}}$ is a spectral approximation to $\mat{A}^{\top}\mat{A}$ i.e.
$$(1 - \epsilon')\mat{A}^\top \mat{A} \leq \mat{\tilde{A}}^\top \mat{\tilde{A}} \leq (1 + \epsilon')\mat{A}^\top \mat{A}$$

Thus, we can solve a linear system in $\mat{A}^\top \mat{A}$ to get $\delta$ multiplicative accuracy by solving $\log(\frac{1}{\delta})$ linear systems in $\mat{\tilde{A}}^\top \mat{\tilde{A}}$ upto constant accuracy and hence, the total running time will be $(\nnz(\mat{\tilde{A}}) + \frac{\norm{\mat{\tilde{A}}}^2_Fs'}{\lambda_{\min}(\mat{\tilde{A}}^\top \mat{\tilde{A}})})\log(\frac{1}{\delta})$ where $s'$ is the number of entries per row of $\mat{\tilde{A}}$ i.e. we can find $x$ such that $\norm{x - x^{*}}_{\mat{A}^\top \mat{A}} \leq \delta\norm{x_0 - x^{*}}_{\mat{A}^\top \mat{A}}$ where $\mat{A}^\top \mat{A}x^{*} = c$.

Assuming uniform sparsity which is the best case for this appraoch and might not be true in general, we get the following running times by instantiating
the above running time with different entry wise sampling results.

Using the results in \cite{achlioptas2007fast} we get, $s' = O(\frac{||A||_F^2}{\epsilon^2})$ where $\epsilon = \frac{\lambda_{\min}(\mat{A}^\top \mat{A})\epsilon'}{2\sigma_{\max}(\mat{A})}$
and hence $s' = \frac{||A||_F^2\sigma^2_{\max}(\mat{A})}{\lambda^2_{\min}(\mat{A}^\top \mat{A}){\epsilon'}^2}$ and hence a total running time of
$\tilde{O}\left(\nnz(\mat{A}) + \frac{\norm{\mat{A}}^4_F\lambda_{\max}(\mat{A}^{\top}\mat{A})}{\lambda^3_{\min}(\mat{A}^\top \mat{A})})\right)$

Using the results in \cite{achlioptas2013near} we get, $s' = O(\frac{\sum_i \norm{\mat{A}_{(i)}}_1^2}{n\epsilon^2})$ and hence we get a total running time of\\
$\tilde{O}\left(\nnz(\mat{A}) + \frac{\sum_i \norm{\mat{A}_{(i)}}_1^2\norm{\mat{A}}^2_F\lambda_{\max}(\mat{A}^{\top}\mat{A})}{n\lambda^3_{\min}(\mat{A}^\top \mat{A})})\right)$ or
$\tilde{O}\left(\nnz(\mat{A}) + \frac{\sum_i s_i\norm{a_i}_2^2\norm{\mat{A}}^2_F\lambda_{\max}(\mat{A}^{\top}\mat{A})}{n\lambda^3_{\min}(\mat{A}^\top \mat{A})})\right)$

Using the results in \cite{arora2006fast} we get, $s' = \sum_{ij}\frac{|\mat{A}_{ij}|}{\sqrt{n}\epsilon}$ and hence, $s' = \frac{\sigma_{\max}(\mat{A})}{\lambda_{\min}(\mat{A}^\top \mat{A})}\sum_{ij}\frac{|\mat{A}_{ij}|}{\sqrt{n}\epsilon}$
and hence a total running time of $\tilde{O}\left( \nnz(\mat{A}) + \frac{\sum_{ij}|\mat{A}_{ij}|\norm{\mat{A}}^2_F\sigma_{\max}(\mat{A})}{\sqrt{n}\lambda^2_{\min}(\mat{A}^\top \mat{A})} \right)$ or
$\tilde{O}\left( \nnz(\mat{A}) + \frac{\sum_{i}\sqrt{s_{i}}\norm{a_i}_2\norm{\mat{A}}^2_F\sqrt{\lambda_{\max}(\mat{A}^{\top}\mat{A})}}{\sqrt{n}\lambda^2_{\min}(\mat{A}^\top \mat{A})} \right)$
\begin{table}[h]
\begin{center}
\caption{Previous results in entry wise matrix sparsification. Given a matrix $\mat{A} \in \R^{n \times n}$, we want to have a sparse matrix $\mat{\tilde{A}} \in \R^{n \times n}$ satisfying $\norm{\mat{A} - \mat{\tilde{A}}}_2 \leq \epsilon$. The first column indicates the number of entries in $\mat{\tilde{A}}$ (in expectation or exact).
  Note that this is not a precise treatment of entrywise sampling results since some results grouped together in the first row have different success probabilities and some results
  also depends on the ratio of the maximum and minimum entries in the matrix but this is the lower bound and we ignore details for simplicity since this suffices for our comparison. \label{tab:Results}}
{\footnotesize
{\tabulinesep=1mm
\begin{tabu}{ |c|c| }
\hline
\multicolumn{2}{ |c| }{\bf Previous entry wise sampling results}  \\
\hline
{\bf Sparsity of $\mat{\tilde{A}}$ in $\tilde{O}$} & {\bf Citation} \\
\hline
{$n\frac{\norm{\mat{A}}_F^2}{\epsilon^2}$} &  {\cite{achlioptas2007fast, gittens2009error, nguyen2009matrix, nguyen2010tensor, drineas2011note, kundu2014note}}  \\
\hline
{$ \sqrt{n}\sum_{ij}\frac{|\mat{A}_{ij}|}{\epsilon}$} & \cite{arora2006fast} \\
\hline
{$\sum_i \frac{||a_{i}||_1^2}{\epsilon^2} + \sqrt{\frac{||\mat{A}||_1^2}{\epsilon^2}   }$} & {\cite{achlioptas2013near}} \\
\hline
\end{tabu}
}
}
\end{center}
\end{table}

\end{document}